\theoremstyle{definition}
\newtheorem{example}{Example}
\newtheorem{axiom}{Axiom}
\newtheorem{property}{Property}
\theoremstyle{plain}
\newtheorem{theorem}{Theorem}
\newtheorem{proposition}{Proposition}
\theoremstyle{remark}
\newtheorem{remark}{Remark}
\def\lawto{\buildrel \mathrm{d} \over \to}
\theoremstyle{definition}
\def\N{\mathbb{N}}
\def\R{\mathbb{R}}
\def\d{\mathrm{d}}
\def\id{\mathds{1}}
\title{Max- and min-stability under first-order stochastic dominance\footnote{This manuscript has not been published, submitted, or otherwise presented elsewhere.}}
\author{Christopher Chambers \and Alan Miller \and Ruodu Wang \and Qinyu Wu}
\author{Christopher P. Chambers\thanks{Department of Economics,
Georgetown University, USA \Letter~{\scriptsize\url{cc1950@georgetown.edu}}}
	\and
	Alan Miller\thanks{
		Faculty of Law, Western University, Canada \Letter~{\scriptsize\url{alan.miller@uwo.ca}}}
	\and
	Ruodu Wang\thanks{Department of Statistics and Actuarial Science, University of Waterloo, Canada \Letter~{\scriptsize\url{wang@uwaterloo.ca}}}
	\and
	Qinyu Wu\thanks{Corresponding author. Department of Statistics and Actuarial Science, University of Waterloo, Canada \Letter~{\scriptsize\url{q35wu@uwaterloo.ca}}}
}
\pgfplotsset{compat=1.18}
\begin{document}

\maketitle

\begin{abstract}  
Max-stability is the property that taking a maximum between two inputs results in a maximum between two outputs.  
We study max-stability with respect to first-order stochastic dominance, the most fundamental notion of stochastic dominance in decision theory. 
Under two additional  standard axioms of nondegeneracy and lower semicontinuity, we establish a representation theorem for functionals satisfying max-stability, which turns out to be represented by the supremum of a bivariate function. A parallel characterization result for min-stability, that is, with  the maximum replaced by the minimum in max-stability, is also established.  By combining both max-stability and min-stability,  we obtain a new characterization for a class of functionals, called the $\Lambda$-quantiles, that appear in finance and political science.


\textbf{Keywords}: First-order stochastic dominance, max-stability, benchmark-loss VaR, $\Lambda$-quantile

\end{abstract}

\section{Introduction}

First-order stochastic dominance (FSD) stands out as a widely used preorder in decision theory, among many other notions in the literature
 (\cite{QS62}, \cite{HR69,HR71} and \cite{RS70}). 
 For a comprehensive treatment of these notions of stochastic dominance, see \cite{SS07}.

 For two cumulative distribution functions (cdf) $F$ and $G$, denote by $F\vee G$ their maximum, that is, the smallest distribution function that dominates both $F$ and $G$ in FSD. Explicitly, this means $F\vee G(x)=\min\{F(x),G(x)\}$ for $x\in \R$.  
For a functional $\rho$  defined on a set of  distributions, we define max-stability with respect to FSD by 
\begin{align}\label{eq-maxdf}
\rho\left(F\vee G\right)= \rho(F)\vee\rho(G),\mbox{~~~for all $F,G$},
\end{align}
where   $a\vee b$ for two real numbers $a,b$ represents their maximum.
When $\rho$ represents a preference, the property \eqref{eq-maxdf} has the natural interpretation as that either $F$ or $G$ is equally preferable to $F\vee G$.

The max-stability \eqref{eq-maxdf} has been studied by 
\cite{MWW22} in the context of risk management, 
 where it was  called equivalence in model aggregation under FSD. 
 A related notion using point-wise maximum instead of FSD is studied by \cite{KZ21}. 
 In the setting of \cite{MWW22}, $\rho$ is a risk measure, and the aggregate distribution $F\vee G$ is interpreted as a robust (conservative) distributional model in the sense that it dominates both $F$ and $G$ with respect to FSD. The left-hand side of \eqref{eq-maxdf} is the risk evaluation via model aggregation, while the right-hand side is the worst-case risk measure under the set $\{F,G\}$. Therefore, \eqref{eq-maxdf} means the equivalence of two robust risk evaluation approaches, which is motivated by problems in robust optimization. \cite{MWW22} characterized  the benchmark-loss Value-at-Risk (\cite{BBM2020}) through max-stability, lower semicontinuity and translation-invariance.\footnote{A risk measure $\rho$ on a set of distributions satisfies translation-invariance if for $m\in\R$, $G(x)=F(x-m)$ for all $x\in\R$ implies
$\rho(G)=\rho(F)+m$.} 
\cite{MWW22} also considered second-order stochastic dominance (SSD) and characterized the class of the benchmark-adjusted Expected Shortfall (\cite{BMW22}) through max-stability with respect to SSD.\footnote{\color{black} These results are covered in Theorems 4 and 5 of the arXiv version-v1 of \cite{MWW22}.}  
Different from \cite{MWW22},
we focus on FSD and do not assume translation-invariance. This broadens the application of the max-stability with respect to FSD and enhances its advantageous properties for a larger range of functionals. In particular, $\Lambda$-quantile (\cite{FMP14}) satisfies max-stability with respect to FSD and is not translation invariant.

A first main result (Theorem \ref{th-main}) of this paper is a characterization result for \eqref{eq-maxdf}, under two further standard axioms of  nondegeneracy and lower semicontinuity.
The characterized functional can be written as the supremum of a bivariate function satisfying certain further properties. 
This characterization generalizes 
the characterization results in \cite{MWW22}.  \textcolor{black}{In parallel to our work,  \cite{KZ24}  studied functionals satisfying max-stability with respect to general partial orders and their result includes a similar representation to Theorem \ref{th-main}  with different motivations and proof techniques.}

A natural dual   of max-stability, referred to as min-stability, is defined by
\begin{align}\label{eq-mindf}
\rho(F\wedge G)=\rho(F)\wedge \rho(G),\mbox{~~~for all $F,G$,}
\end{align} 
where $F\wedge G$ is the largest distribution function that is dominated by both $F$ and $G$ in FSD; this means $F\wedge G(x)=\max\{F(x),G(x)\}$ for all $x\in\R$. In this context, $F\wedge G$ is interpreted as an optimistic distributional model, and the right-hand side of \eqref{eq-mindf} represents the best-case risk measure under the set $\{F,G\}$. Thus, \eqref{eq-mindf} captures the equivalence of two optimistic risk evaluation approaches. 
Under nondegeneracy and upper semicontinuity, we present a characterization result for min-stability (Proposition \ref{prop-mainMinS}), with a proof similar to that of Theorem \ref{th-main}. 
The joint property of 
max-stability and min-stability has been studied by
\cite{CGR08} in a general setting different from FSD. 
In our setting, combining max-stability and min-stability leads to a characterization for $\Lambda$-quantiles (Theorem \ref{th:main-MM}),  a family of risk measures introduced by \cite{FMP14} in quantitative finance and axiomatized by \cite{BP22}.
The $\Lambda$-quantiles for discrete distributions 
appear in political science as the only voting schemes satisfying  strategy proofness,
 anonymity and 
 efficiency, as shown in the celebrated work of \cite{M80}. 
\textcolor{black}{Our axiomatization   offers a new financial interpretation  of  $\Lambda$-quantiles as risk measures: for any two comonotonic loss random variables, such as losses from two call option price changes on the same asset, if both losses are acceptable then so is their maximum, and if both are unacceptable then so is their minimum.}

The rest of the paper is structured as follows. Section \ref{sec:main} formally introduces the main axiom of max-stability, along with two other standard axioms, and presents the main characterization result. We also present a parallel characterization result for min-stability as a key axiom.
Section \ref{sec:example} provides examples in finance that illustrate the representation outlined in the main result.
Section \ref{sec:lambda} explores the combined framework where both max-stability and min-stability serve as key axioms, which gives a new characterization of $\Lambda$-quantiles.
Section \ref{sec:con} concludes the paper, with a brief discussion on
max-stability or min-stability with respect to other orders  or relations than FSD, which has studied by \cite{K79}
and \cite{KZ21,KZ24}. Section \ref{sec:mainproof} contains all proofs and some technical illustrations.


\section{The main result}\label{sec:main}

\subsection{Axioms}
Let $\mathcal M_C$ be the set of all distributions on $\R$ with compact support, represented by their cdf. We use $\delta_x$ to represent the point-mass at $x$, and $[n]=\{1,\dots,n\}$ for $n\in\N$. Denote by $\id_A$ as the indicator function of $A$.
We adopt the convention that $0\cdot \infty=0$.
Denote by $\mathcal M_{D}$ the set of all discrete distributions with finite support and $\mathcal M_{D,2}$ of all two-point distributions, that is,
\begin{align*}
 \mathcal M_{D}  &  = \left\{\sum_{i=1}^n p_i\delta_{x_i}:n\in \N,~x_i\in\R\mbox{ and }p_i\in[0,1]~{\rm for~}i\in[n],~\sum_{i=1}^n p_i=1\right\} ;\\
 \mathcal M_{D,2}  &   =\left\{p\delta_x+(1-p)\delta_y:~x,y\in\R,~p\in[0,1]\right\}.
\end{align*}
Consider a function $f: A\to\R\cup\{-\infty,\infty\}$, where $A$ is a subset of $\R$. We say $f$ satisfies the lower semicontinuity if 
$\{x_n\}_{n\in\N}\subseteq A$ converges to $x\in A$ implies 
$\liminf_{n\to \infty} f(x_n)\ge f(x)$. Conversely,  $f$ is upper semicontinuous if $-f$ is lower semicontinuous.

For $F,G\in\mathcal M_C$, we write $F\preceq_1 G$ if $F(x)\ge G(x)$ for all $x\in\R$, that is, $G$ dominates $F$ in FSD. 
 For $F,G\in\mathcal M_C$, denote by $F\vee G$ the supremum of the set $\{F,G\}$ with respect to FSD. Such supremum always exists and is unique because the ordered space $(\mathcal M_C,\preceq_1)$ is a lattice; see e.g., \cite{KR00}, \cite{MS06} and \cite{MWW22}. Moreover, it can be explicitly described by $F\vee G(x)=\min\{F(x),G(x)\}$ for $x\in \R$.

We propose  the following axioms for $\rho: \mathcal M_C\to \R$. 

\renewcommand\theaxiom{ND}
\begin{axiom}[Nondegeneracy]
	\label{ax:ND}
	For $x,y\in\R$, $x\ne y$ implies $\rho(\delta_x)\ne \rho(\delta_y)$.
\end{axiom}

\renewcommand\theaxiom{LS}
\begin{axiom}[Lower semicontinuity]
	\label{ax:LS}
	For $\{F_n\}_{n\in\N}\subseteq\mathcal M_C$ and $F\in\mathcal M_C$, if $F_n\lawto F$, where $\lawto$ denotes the weak convergence, then $\liminf_{n\to\infty}\rho(F_n)\ge \rho(F)$.
\end{axiom}

\renewcommand\theaxiom{MaxS}
\begin{axiom}[Max-stability]
\label{ax:E}
For $F,G\in\mathcal M_C$, $\rho(F\vee G)=\rho(F)\vee \rho(G)$.
\end{axiom}

Clearly, Axioms \ref{ax:ND} and \ref{ax:LS}  are very weak and provide some regularity for $\rho$.
The main axiom of interest is Axiom \ref{ax:E}.

\textcolor{black}{
The financial interpretation of Axiom \ref{ax:E} can be explained by interpreting $\rho$  as a risk measure. 
A risk measure $\rho$ induces an acceptance set $\mathcal A$, 
which is the set of random variables with distribution $F$ satisfying $\rho(F)\le 0$. The set $\mathcal A$ is commonly interpreted as the set of random losses considered as acceptable without additional capital injection.  
Take any two  comonotonic random variables  $X$ and $Y$ representing financial losses.\footnote{Two real-valued functions $f,g$ on the same domain are comonotonic if $(f(x)-f(y))(g(x)- g(y))\ge 0$ for all $x,y$ in their domain. An equivalent condition is that they are both increasing transforms of the same function.} For instance, they can be the loss from the price change of an asset  and that of a call option on the same asset in a Black-Scholes model. 
For such $X$ and $Y$
 Axiom \ref{ax:E} is equivalent to the following property: if both $X$ and $Y$ are acceptable according to $\rho$, then $\max (X,Y)$ is acceptable; this is because the distribution of $\max(X,Y)$  is the supremum of the distributions of $X$ and $Y$ with respect to FSD. 
 Thus, the worst of two acceptable comonotonic losses is still acceptable. 
 This property is satisfied by the popular risk measure in finance, the Value-at-Risk; see Section \ref{sec:example}.}

Next, we say that $\rho$ is \emph{consistent with $\preceq_1$} if $\rho(F)\le \rho(G)$ whenever $F\preceq_1 G$. 
Notice that $\rho(G)=\rho(F\vee G)=\rho(F)\vee\rho(G)$ if $F\preceq_1 G$, and therefore Axiom \ref{ax:E} is stronger than consistency with $\preceq_1$.
Together with Axiom \ref{ax:ND}, this implies that $\rho$ satisfies the following property:
\renewcommand\theproperty{M}
\begin{property}[Monotonicity on constants]
	\label{ax:MC}
	For $x,y\in\R$, $x< y$ implies $\rho(\delta_x)<\rho(\delta_y)$.
\end{property}  
Property \ref{ax:MC} is useful in the proof of our main results. 

\subsection{Characterization}\label{sec:characterization}

Next, we present the main characterization result.
\textcolor{black}{Throughout, the terms  ``increasing" and ``decreasing" are in the weak sense.} 
\begin{theorem}\label{th-main}
Let $\rho:\mathcal M_C\to\R$. The following statements are equivalent.
\begin{itemize}
\item[(i)] Axioms \ref{ax:ND}, \ref{ax:LS} and \ref{ax:E} hold.
\item[(ii)] The representation holds
\begin{align}\label{eq-main}
\rho(F)=\sup_{x\in\R}\psi(x,F(x)) \mbox{ for all $F\in\mathcal M_C$},
\end{align}
for some
  function $\psi:\R\times[0,1]\to \R\cup\{-\infty\}$ that is decreasing and lower semicontinuous in the second argument and satisfies  $\psi(x,0)<\psi(y,0)$ if $x,y\in\R$ with $x<y$, and $\psi(x,1)=-\infty$ for all $x\in\R$.

\item[(iii)] The representation \eqref{eq-main} holds for some
  function $\psi:\R\times[0,1]\to \R\cup\{-\infty\}$ that is increasing and lower semicontinuous in the first argument  and decreasing and lower semicontinuous in the second argument and satisfies  $\psi(x,0)<\psi(y,0)$ if $x,y\in\R$ with $x<y$, and $\psi(x,1)=-\infty$ for all $x\in\R$.
\end{itemize}
\end{theorem}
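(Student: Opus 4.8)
The plan is to prove the cycle of implications $(iii)\Rightarrow(ii)\Rightarrow(i)$ and $(i)\Rightarrow(iii)$; the first two arrows are routine verifications, while the last is the substantive direction. For $(iii)\Rightarrow(ii)$ there is nothing to do, since the hypotheses in (iii) are strictly stronger than those in (ii). For $(ii)\Rightarrow(i)$, I would take $\psi$ as given and check the three axioms directly. Nondegeneracy and Property \ref{ax:MC} follow because $\rho(\delta_x)=\sup_{y}\psi(y,\id_{\{y\ge x\}})$; the contribution from $y<x$ is $\sup_{y<x}\psi(y,0)$, which is strictly increasing in $x$, and the contribution from $y\ge x$ is at most $\psi(y,0)$ as well (using monotonicity in the second argument and $\psi(\cdot,1)=-\infty$), so a short argument identifies $\rho(\delta_x)=\sup_{y<x}\psi(y,0)$ and gives strict monotonicity. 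Max-stability is immediate from $F\vee G(x)=\min\{F(x),G(x)\}$ together with the fact that $\psi$ is decreasing in the second argument: $\psi(x,\min\{F(x),G(x)\})=\max\{\psi(x,F(x)),\psi(x,G(x))\}$, and then taking suprema over $x$ commutes with the pairwise maximum. Lower semicontinuity requires a little more care: if $F_n\lawto F$ then $F_n(x)\to F(x)$ at every continuity point $x$ of $F$, and using lower semicontinuity of $\psi$ in the second argument plus a diagonal/approximation argument on the (at most countable) set of discontinuities, one gets $\liminf_n \rho(F_n)\ge \sup_x \psi(x,F(x))=\rho(F)$.

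The heart of the matter is $(i)\Rightarrow(iii)$. Here I would first use Property \ref{ax:MC}: since $\rho(\delta_x)$ is strictly increasing in $x$, define the candidate ``first-argument baseline'' and, more importantly, extract $\psi$ from the values of $\rho$ on two-point distributions. The natural definition is
\[
\psi(x,t)=\sup\left\{\rho(F): F\in\mathcal M_C,\ F(x)\le t\right\},\qquad x\in\R,\ t\in[0,1],
\]
or a variant restricted to $\mathcal M_{D,2}$; one then shows this supremum is attained essentially at two-point distributions of the form $t\delta_{x}+(1-t)\delta_{z}$ with $z$ large, so that in fact $\psi(x,t)=\lim_{z\to\infty}\rho(t\delta_x+(1-t)\delta_z)$ when $t<1$, and $\psi(x,1)=-\infty$ is forced by consistency with $\preceq_1$ and nondegeneracy (any $F$ with $F(x)=1$ is dominated in FSD, letting the dominating mass escape to $+\infty$, by distributions with arbitrarily negative $\rho$-value — this needs an argument, but monotonicity on constants and LS should deliver $\inf_F\rho(F)=-\infty$ or at least that $\psi(x,1)$ can be taken $-\infty$). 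Monotonicity of $\psi$ in each argument is built into the definition: decreasing in $t$ because enlarging $t$ enlarges the feasible set, and increasing in $x$ because $\{F:F(x)\le t\}$ grows with $x$.

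The two nontrivial points will be (a) proving the representation $\rho(F)=\sup_x\psi(x,F(x))$, and (b) proving lower semicontinuity of $\psi$ in each argument. For (a), the inequality $\rho(F)\le\sup_x\psi(x,F(x))$ is trivial from the definition of $\psi$ (take $x$ with, say, $F(x)$ near any value; more precisely $\rho(F)\le\psi(x,F(x))$ for every $x$). The reverse inequality $\rho(F)\ge\sup_x\psi(x,F(x))$ is where max-stability does the real work: given $x$ and any $G$ with $G(x)\le F(x)$, one has $F\preceq_1 F\vee(\text{something})$... the key manipulation is that $F\vee G$ has value $F(x)$ at $x$ when $G(x)\le F(x)$, hence $\rho(F\vee G)$ is squeezed, and since $\rho(F\vee G)=\rho(F)\vee\rho(G)$ we can compare $\rho(F)$ against $\rho(G)$; iterating over a suitable sequence $G$ approaching the supremum and invoking LS to pass to the limit yields $\rho(F)\ge\psi(x,F(x))$ for each $x$, hence $\ge$ the supremum. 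This squeezing step, making precise how $F\vee G$ lets max-stability transfer information from arbitrary feasible $G$ back to $F$, is the main obstacle and the place where I expect to spend the most effort; it will likely require approximating $F$ itself by finitely supported distributions and carefully choosing the comparison distributions $G$ so that $F\vee G=F$ on the relevant part of $\R$. For (b), lower semicontinuity of $\psi(x,\cdot)$ should follow from Axiom \ref{ax:LS}: if $t_n\uparrow t$ then the two-point distributions $t_n\delta_x+(1-t_n)\delta_z$ converge weakly to $t\delta_x+(1-t)\delta_z$, so $\liminf\psi(x,t_n)\ge\liminf\rho(t_n\delta_x+(1-t_n)\delta_z)\ge\rho(t\delta_x+(1-t)\delta_z)$, and taking $z\to\infty$ gives $\liminf\psi(x,t_n)\ge\psi(x,t)$; monotonicity then upgrades this to full lower semicontinuity. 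Lower semicontinuity in $x$ is analogous, using $t_n\delta_{x_n}+(1-t_n)\delta_z\lawto t\delta_x+(1-t)\delta_z$ when $x_n\to x$ (at least along sequences; combined with monotonicity in $x$ this suffices). Finally the strict inequality $\psi(x,0)<\psi(y,0)$ for $x<y$ reduces to Property \ref{ax:MC} once one checks $\psi(x,0)=\sup_{z}\rho(\delta_z)$ over an appropriate range — actually $\psi(x,0)=\lim_{z\to\infty}\rho(\delta_z)$ is independent of $x$, so more likely the right definition makes $\psi(x,0)$ pick up $\rho(\delta_x)$-type behavior; I would reconcile this by instead setting $\psi(x,0)$ via $\sup\{\rho(F):F(x)=0\}=\sup\{\rho(F): \operatorname{supp}F\subseteq(x,\infty)\}$ and noting $\rho(\delta_y)$ for $y>x$ lies in this set, whereas for the strict separation one compares against $\rho(\delta_x)<\rho(\delta_y)$ using a limiting two-point construction — the precise bookkeeping here is routine once the main representation is in hand.
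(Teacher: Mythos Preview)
Your plan for the easy implications is fine; the gap is in $(i)\Rightarrow(iii)$, and it is structural rather than a matter of bookkeeping.

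Your candidate $\psi(x,t)=\sup\{\rho(F):F(x)\le t\}$, or its two-point version $\lim_{z\to\infty}\rho(t\delta_x+(1-t)\delta_z)$, is too large to give the representation. Take $\rho$ to be the left $\alpha$-quantile (Example~\ref{ex-VaR}). For $t<\alpha$ one has $\rho(t\delta_x+(1-t)\delta_z)=z$, so your $\psi(x,t)=+\infty$ for every $x$ and every $t<\alpha$; since any $F\in\mathcal M_C$ has $F(x)<\alpha$ for $x$ below its support, $\sup_x\psi(x,F(x))=+\infty\ne\rho(F)$. You already noticed the symptom when you computed $\psi(x,0)=\lim_{z\to\infty}\rho(\delta_z)$ and found it independent of $x$; that is not a cosmetic issue but the signal that the definition is wrong. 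Correspondingly, the ``reverse inequality'' is precisely what fails, and your sketch there contains an error anyway: if $G(x)\le F(x)$ then $(F\vee G)(x)=\min\{F(x),G(x)\}=G(x)$, not $F(x)$, and $\rho(F\vee G)=\rho(F)\vee\rho(G)$ yields no lower bound on $\rho(F)$ in terms of $\rho(G)$.

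The paper's construction goes in the opposite direction: rather than sending the \emph{upper} mass to $+\infty$, it (morally) sends the \emph{lower} mass to $-\infty$. Writing $f(x,y,p)=\rho(p\delta_x+(1-p)\delta_y)$ for $x\le y$, the key lemma (Proposition~\ref{prop:H}(iii), proved via Axiom~\ref{ax:E} applied to $\delta_{x_2}$ and $p\delta_{x_1}+(1-p)\delta_y$) is that once $y$ exceeds a threshold $h(x,p)$, i.e.\ once $f(x,y,p)>\rho(x)$, the value $f(x,y,p)$ no longer depends on $x$. The paper defines $\psi(y,p)$ to be this common stabilized value (and $-\infty$ if no such $x$ exists), which is automatically finite. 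In the VaR example this yields $\psi(y,p)=y$ for $p<\alpha$ and $-\infty$ for $p\ge\alpha$, as required. The representation is then verified on $\mathcal M_{D,2}$, extended to $\mathcal M_D$ by writing any finitely supported $F$ as a finite FSD-join of two-point distributions (this is where Axiom~\ref{ax:E} does its real work), and finally to $\mathcal M_C$ by step approximation and Axiom~\ref{ax:LS}. The stabilization-in-$x$ lemma is the idea your plan is missing.
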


The implications (iii) $\Rightarrow$ (i) and (ii) $\Rightarrow$ (iii) of Theorem \ref{th-main} are straightforward to verify, and their proofs are provided below.

{\bf Proof of (iii) $\Rightarrow$ (i):}
Suppose that $\rho$ admits the representation in the statement (iii). Axiom \ref{ax:ND} is trivial as one can check that $\rho(x)=\lim_{y\uparrow x}\psi(y,0)=\psi(x,0)$.  For $F,G\in\mathcal M_C$, we have
\begin{align*}
\rho(F\vee G)&=\sup_{x\in\R}\psi(x,F\vee G(x))
=\sup_{x\in\R}\psi(x,\min\{F(x),G(x)\})\\
&=\sup_{x\in\R}\left\{\psi(x,F(x))\vee \psi(x,G(x))\right\}
=\sup_{x\in\R}\psi(x,F(x))\vee \sup_{x\in\R}\psi(x,G(x))
=\rho(F)\vee \rho(G),
\end{align*}
where the third equality holds because $\psi$ is decreasing in the second argument. This gives Axiom \ref{ax:E}. 
To see Axiom \ref{ax:LS}, let $\{F_n\}_{n\in\N}\subseteq \mathcal M_C$ be a sequence such that $F_n\lawto F$ with $F\in\mathcal M_C$. It holds that
\begin{align*}
\liminf_{n\to\infty} \rho(F_n)
=\liminf_{n\to\infty} \sup_{x\in\R}\psi(x,F_n(x))
\ge \sup_{x\in\R}\liminf_{n\to\infty}\psi(x,F_n(x))
\ge \sup_{x\in\R}\psi(x,F(x))
=\rho(F),
\end{align*}
where the second inequality holds because $\psi$ is decreasing and lower semicontinuous in the second argument, and $\limsup_{n\to\infty} F_n(x)\le F(x)$ for any $x\in\R$. Hence, we have verified Axiom \ref{ax:LS}.

{\bf Proof of (ii) $\Rightarrow$ (iii):}
Denote by $\Psi_1$ and $\Psi_2$ the sets of the functions $\psi$ that satisfy all conditions in (ii) and (iii), respectively. 
For $\psi\in\Psi_1$, we  define $\rho_{\psi}(F)=\sup_{x\in\R}\psi(x,F(x))$ for $F\in\mathcal M_C$.
To prove this implication, it suffices to verify that for any $\psi\in\Psi_1$, there exists $\widetilde{\psi}\in\Psi_2$ such that $\rho_{\widetilde{\psi}}=\rho_{\psi}$ on $\mathcal M_C$. We claim that this assertion holds by setting 
\begin{align}\label{eq-deLS}
\widetilde{\psi}(x,p)=\sup_{t<x}\psi(t,p).
\end{align}
Indeed, for $F\in\mathcal M_C$, we have
\begin{align*}
\rho_{\widetilde{\psi}}(F)&=\sup_{x\in\R}\widetilde{\psi}(x,F(x))
=\sup_{x\in\R}\sup_{t<x}\psi(t,F(x))
=\sup_{t\in\R}\sup_{x>t}\psi(t,F(x))
=\sup_{t\in\R}\psi(t,F(t))=\rho_{\psi}(F),
\end{align*}
where the fourth step holds because $F$ is right continuous and  $\psi$ is decreasing and lower semicontinuous in the second argument.
It remains to show that $\widetilde{\psi}\in\Psi_2$. By the definition of $\widetilde{\psi}$, one can easily check that it is increasing and lower semicontinuous in the first argument. Notice that $p\mapsto\psi(t,p)$ is decreasing and lower semicontinuous for any $t\in\R$, and $\widetilde{\psi}$ is the supremum of these functions. This implies that $\widetilde{\psi}$ is decreasing and lower semicontinuous in the second argument. In addition, for $x<y$, we have
\begin{align*}
\widetilde{\psi}(x,0)=\sup_{t<x}\psi(t,0)<\psi\left(\frac{x+y}{2},0\right)\le \sup_{t<y}\psi(t,0)=\widetilde{\psi}(y,0),
\end{align*}
where the inequalities follows from the strictly increasing monotonicity of $t\mapsto \psi(t,0)$. {\color{black} It is also straightforward to verify that  $\psi(x, 1)=-\infty$ for all $x\in\R$ implies $\widetilde{\psi}(x, 1)=-\infty$ for all $x\in\R$.}
Hence, we have concluded that $\widetilde{\psi}\in\Psi_2$, and the implication (ii) $\Rightarrow$ (iii) holds.\qed

The most challenging part of Theorem \ref{th-main} is the implication (i) $\Rightarrow$ (ii).
We present the sketch of the proof for (i) $\Rightarrow$ (ii) below, and the detailed proof is put in Section \ref{sec:mainproof}. In the first step, we consider the situation that $\rho$ is restricted in $\mathcal M_{D,2}$. We will construct such a function $\psi$ in the theorem and verify that the representation holds on $\mathcal M_{D,2}$. The second step is to extend the representation to $\mathcal M_{D}$ by using Axiom \ref{ax:E} and the fact that any distribution in $\mathcal M_{D}$ can be obtained by finitely many joins acting on $\mathcal M_{D,2}$ with respect to $\preceq_1$. The last step is the extension from $\mathcal M_D$ to $\mathcal M_C$ by a standard convergence argument. 


\textcolor{black}{Examples of risk measures in finance with the form \eqref{eq-main} will be discussed in the next section. A simple example of $\psi$ satisfying the conditions in Theorem \ref{th-main} is $\psi(x,p)= x-1/(1-p)$, which belongs to the class in Example \ref{ex-lossVaR} below.  
We next provide an example that satisfies Axiom \ref{ax:E} but not Axiom \ref{ax:ND}.}

\begin{example}[Decreasing functions of the cdf] 
\label{ex:F}
Let $\psi(x,p)=g(p)-\infty\id_{\{x\neq x_0\}}$ for some $x_0\in\R$ and decreasing function $g:[0,1]\to\R$. Using representation \eqref{eq-main}, we have $\rho(F)=g(F(x_0))$ for $F\in\mathcal M_C$. In this case, $\rho$ satisfies Axiom \ref{ax:E}. Note that $\psi(x,0)=\psi(y,0)=-\infty$ whenever $x,y\neq x_0$, 
and thus, $\rho$ does not satisfy Axiom \ref{ax:ND}. 
Moreover, one can check that $\rho$ satisfies Axiom \ref{ax:LS} if and only if $g$ is lower semicontinuous.
\end{example}

Next, we present an analogue to Theorem \ref{th-main} by changing the maximum to a minimum.
 For $F,G\in\mathcal M_C$, denote by $F\wedge G$ the infimum of $\{F,G\}$ with respect to FSD. Such an infimum can be explicitly given by $F\wedge G(x)=\max\{F(x),G(x)\}$ for $x\in\R$. The following axiom for $\rho:\mathcal M_C\to\R$ is a dual property of max-stability. 

\renewcommand\theaxiom{MinS}
\begin{axiom}[Min-stability]
	\label{ax:MS}
	For $F,G\in\mathcal M_C$, $\rho(F\wedge G)=\rho(F)\wedge \rho(G)$.
\end{axiom}

To obtain a parallel result of Theorem \ref{th-main}, we require upper semicontinuity instead of lower semicontinuity.

\renewcommand\theaxiom{US}
\begin{axiom}[Upper semicontinuity]\label{ax:US}
For $\{F_n\}_{n\in\N}\subseteq\mathcal M_C$ and $F\in\mathcal M_C$, if $F_n\lawto F$, then $\limsup_{n\to\infty}\rho(F_n)\le \rho(F)$.
\end{axiom}

 \begin{proposition}\label{prop-mainMinS}
  	Let $\rho:\mathcal M_C\to\R$. The following statements are equivalent.
  	\begin{itemize}
  		\item[(i)]  Axioms \ref{ax:ND}, \ref{ax:US} and \ref{ax:MS} hold.
  		\item[(ii)] Denote by $F(x-)=\lim_{y\uparrow x}F(y)$
         the left continuous version of $F$.
        The representation holds
  		\begin{align}\label{eq-mainMinS}
  			\rho(F)=\inf_{x\in\R}\phi(x,F(x-)) \mbox{ for all $F\in\mathcal M_C$},
  		\end{align}
  		for some
  		function $\phi:\R\times[0,1]\to \R\cup\{\infty\}$ that is decreasing and lower semicontinuous in the second argument and satisfies  $\phi(x,1)<\phi(y,1)$ if $x,y\in\R$ with $x<y$, and $\phi(x,0)=\infty$ for all $x\in\R$.
  		
  		\item[(iii)] The representation \eqref{eq-mainMinS} holds for some
  		function $\phi:\R\times[0,1]\to \R\cup\{\infty\}$ that is increasing and lower semicontinuous in the first argument and decreasing and lower semicontinuous in the second argument and satisfies  $\phi(x,1)<\phi(y,1)$ if $x,y\in\R$ with $x<y$, and $\phi(x,0)=\infty$ for all $x\in\R$.
  	\end{itemize}
  \end{proposition}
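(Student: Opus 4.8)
The plan is to derive Proposition \ref{prop-mainMinS} from Theorem \ref{th-main} by a duality transformation rather than repeating the three-step construction. The key observation is that min-stability for $\rho$ corresponds to max-stability for a suitably reflected functional. Concretely, for $F\in\mathcal M_C$ define its reflection $\widehat F$ to be the cdf of $-X$ when $X\sim F$, so that $\widehat F(x)=1-F((-x)-)$; this is an order-reversing bijection of $\mathcal M_C$ onto itself that satisfies $\widehat{F\wedge G}=\widehat F\vee\widehat G$ and is continuous with respect to weak convergence. Setting $\widehat\rho(F)=-\rho(\widehat F)$, one checks directly that $\rho$ satisfies Axioms \ref{ax:ND}, \ref{ax:US}, \ref{ax:MS} if and only if $\widehat\rho$ satisfies Axioms \ref{ax:ND}, \ref{ax:LS}, \ref{ax:E}: nondegeneracy is symmetric, the sign flip turns $\limsup$ into $\liminf$ so upper semicontinuity becomes lower semicontinuity, and $\widehat\rho(F\wedge G)$-type computations turn the $\wedge$ into a $\vee$ with the extra sign giving exactly max-stability.

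Next I would apply Theorem \ref{th-main} to $\widehat\rho$. This yields a function $\psi:\R\times[0,1]\to\R\cup\{-\infty\}$, with all the properties listed in part (iii) of that theorem, such that $\widehat\rho(G)=\sup_{x\in\R}\psi(x,G(x))$ for all $G\in\mathcal M_C$. Unwinding the definitions gives
\begin{align*}
\rho(F)=-\widehat\rho(\widehat F)=-\sup_{x\in\R}\psi\bigl(x,\widehat F(x)\bigr)=\inf_{x\in\R}\bigl(-\psi(x,\widehat F(x))\bigr)=\inf_{x\in\R}\bigl(-\psi(-x,1-F(x-))\bigr).
\end{align*}
Hence the natural candidate is $\phi(x,p)=-\psi(-x,1-p)$. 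I would then verify that $\phi$ inherits the required structure: the map $x\mapsto-\psi(-x,\cdot)$ is increasing because $x\mapsto\psi(x,\cdot)$ is increasing and we compose with $x\mapsto -x$ and then negate; lower semicontinuity in the first argument is preserved because $\psi(-x,\cdot)$ is upper semicontinuous in $x$ and negation flips it back; the map $p\mapsto-\psi(\cdot,1-p)$ is decreasing for the same reason and lower semicontinuous because $p\mapsto\psi(\cdot,1-p)$ is upper semicontinuous; the strict inequality $\psi(x,0)<\psi(y,0)$ for $x<y$ becomes $\phi(x,1)=-\psi(-x,0)<-\psi(-y,0)=\phi(y,1)$; and $\psi(x,1)=-\infty$ becomes $\phi(x,0)=\infty$. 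This establishes (i) $\Rightarrow$ (iii), and (iii) $\Rightarrow$ (ii) is immediate since (iii) is a special case of (ii). For (ii) $\Rightarrow$ (i) I would run the reflection argument in reverse: given $\phi$ as in (ii), define $\psi(x,p)=-\phi(-x,1-p)$, check $\psi$ satisfies the conditions in (ii) of Theorem \ref{th-main}, apply the already-proved implication (ii) $\Rightarrow$ (i) of that theorem to get that $\widehat\rho$ satisfies \ref{ax:ND}, \ref{ax:LS}, \ref{ax:E}, and then transfer back to $\rho$ via the reflection equivalence.

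The one genuinely delicate point, and the place I expect to spend the most care, is the appearance of $F(x-)$ rather than $F(x)$ in \eqref{eq-mainMinS}. This is forced by the reflection: if $X\sim F$ then $-X$ has cdf $x\mapsto \P(-X\le x)=\P(X\ge -x)=1-F((-x)-)$, and the left limit cannot be replaced by the value of $F$ itself because $F$ is right-continuous while $\widehat F$ must be too — so the left-continuous version of $F$ is exactly what surfaces. I would state the identity $\widehat F(x)=1-F((-x)-)$ carefully, note that $\{\widehat F:F\in\mathcal M_C\}=\mathcal M_C$ (every compactly supported cdf is the reflection of another), and confirm that $F\mapsto\widehat F$ is a weak-continuity-preserving lattice anti-isomorphism, i.e. $F_n\lawto F\iff \widehat F_n\lawto\widehat F$ and $\widehat{F\wedge G}=\widehat F\vee\widehat G$, $\widehat{F\vee G}=\widehat F\wedge\widehat G$. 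The rest is bookkeeping with signs and the elementary fact that $\liminf(-a_n)=-\limsup a_n$. A reader could alternatively mimic the direct three-step proof of Theorem \ref{th-main} verbatim with maxima replaced by minima and right-continuity replaced by left-continuity, but the duality route is shorter and makes the role of $F(x-)$ transparent.
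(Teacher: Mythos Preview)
Your reflection approach is sound and indeed cleaner than redoing the three-step construction; the paper itself gives no proof beyond the remark that it is ``symmetric to that of Theorem \ref{th-main}'', so your route is a genuine alternative and makes the appearance of $F(x-)$ transparent, exactly as you say.

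There is, however, a real error in your semicontinuity bookkeeping. You write that ``$\psi(-x,\cdot)$ is upper semicontinuous in $x$'' and that ``$p\mapsto\psi(\cdot,1-p)$ is upper semicontinuous''. This is wrong: precomposition with a continuous map such as $x\mapsto -x$ or $p\mapsto 1-p$ \emph{preserves} lower semicontinuity; it does not flip it. Only negating the \emph{values} flips lsc to usc. Carrying this through correctly, $\phi(x,p)=-\psi(-x,1-p)$ is \emph{upper} semicontinuous in each argument, not lower semicontinuous. Concretely: $\psi(\cdot,p)$ increasing and left-continuous implies $x\mapsto\psi(-x,p)$ is decreasing and right-continuous, hence $x\mapsto\phi(x,p)$ is increasing and right-continuous, i.e.\ upper semicontinuous; the second argument is analogous.

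This matters because upper semicontinuity of $\phi$ is in fact the correct condition, and the statement of Proposition \ref{prop-mainMinS} as printed appears to carry a typo (``lower'' should read ``upper'' in both places). With $\phi$ merely lower semicontinuous in the second argument the implication (ii)$\Rightarrow$(i) fails: take $\phi(x,p)=x$ for $p\ge 1/2$ and $\phi(x,p)=\infty$ for $p<1/2$, which is decreasing and lsc in $p$; then for $F_n=(\tfrac12-\tfrac1n)\delta_0+(\tfrac12+\tfrac1n)\delta_1\lawto F=\tfrac12\delta_0+\tfrac12\delta_1$ one gets $\rho(F_n)=1$ for all $n$ but $\rho(F)=0$, so Axiom \ref{ax:US} is violated. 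Your reflection, done correctly, delivers usc and would have flagged this. So the strategy is right; only the one-line semicontinuity justification needs to be rewritten, and you should note the discrepancy with the stated conditions.
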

  
We omit the proof of the above proposition as it is symmetric to that of Theorem \ref{th-main}.

\begin{remark}
    {\color{black} \cite{KZ24} studied functionals satisfying max-stability  under the name of maxitivity in parallel to our work and obtained a representation similar to Theorem \ref{th-main}.  
Their main results cover a more general setting, which includes, for instance,  the property of  max-stability
with respect to SSD. 
Their  proof techniques are quite different from ours, as explained below.  \cite{KZ24} formulated  max-stability   in terms of quantile functions. For $\rho:\mathcal M_C\to \R$,
define $\widetilde{\rho}$ as $\widetilde{\rho}(Q)=\rho(Q^{-1+})$ for $Q\in\mathcal Q_C$, where $\mathcal Q_C$ denotes the set of quantile functions associated with $\mathcal M_C$, and $Q^{-1+}$ is the right-inverse function of $Q$. Their proof focuses on reformulating the sublevel sets $\mathcal A_s=\{Q: \widetilde{\rho}(Q)\le s\}$ for $s\in\R$.
To this end, they introduced a bivariate functional $J: \R \times \mathcal L \to \R$, where $\mathcal L$ is the set of all lower semicontinuous functions on $\R$, and showed that each sublevel set can be expressed as $\mathcal A_s=\{Q: J(s,Q(s))\le 0\}$. The properties of $J$ in its second argument allow for a dual representation of $J$. Together with the identity $\widetilde{\rho}(Q)=\inf\{s\in\R: Q\in\mathcal A_s\}$ and some further standard calculations, they derived a representation of $\widetilde{\rho}$ analogous to \eqref{eq-main}.
In contrast, our proof of Theorem \ref{th-main} takes a more direct approach. We construct the functional $\psi$ in \eqref{eq-main} by analyzing two-point distributions, and then verify that the resulting representation extends to all distributions in $\mathcal M_C$. 
}
\end{remark}

\section{Three examples of risk measures in finance}\label{sec:example}

In this section, we present three examples of functionals  that can be expressed in the form of \eqref{eq-main}, with the specific forms of $\psi$ provided.

These three examples are all used as risk measures in finance. 
The first example is the quantile at a fixed probability level, known as the Value-at-Risk (VaR), one of the most prominent risk measures in financial regulation (see \cite{MFE15}). 

The next two examples generalize VaR in different ways. 
 The benchmark-loss VaR introduced by \cite{BBM2020}  generalizes a usual quantile  by considering the supremum of quantile over different probability levels  minus a benchmark quantile function. 
 The    $\Lambda$-quantile introduced by \cite{FMP14} 
 generalizes a usual quantile by replacing the probability level with a curve  $x\mapsto \Lambda(x)$.

\begin{example}[VaR]\label{ex-VaR}
Let $\psi(x,p)=x-\infty\id_{\{p\ge \alpha\}}$ for some $\alpha\in(0,1)$. In this case, $\rho(F)=\sup\{x: F(x)<\alpha\}$. This is the left-quantile at level $\alpha$.
\end{example}

\begin{example}[Benchmark-loss VaR]\label{ex-lossVaR}
Let $\psi(x,p)=x-h(p)$, where $h:[0,1]\to\R\cup\{\infty\}$ is increasing and upper semicontinuous and \textcolor{black}{satisfies $h(1)=\infty$}. In this case, we have
\begin{align*}
\rho(F)=\sup_{x\in\R}\{x-h(F(x))\},~~F\in\mathcal M_C.
\end{align*}
Denote the $\alpha$-left quantile function of the distribution $F$ by $F^{-1}(\alpha)$, as represented in Example \ref{ex-VaR}.
By standard manipulation,
the functional $\rho$ can be represented based on the usual quantiles as 
\begin{align*}
\rho(F)=\sup_{\alpha\in[0,1]}\{F^{-1}(\alpha) -h(\alpha)\},~~F\in\mathcal M_C.
\end{align*} 
This is exactly the benchmark-loss VaR.
In particular, if $h(p)=\infty\id_{\{p\ge \alpha_0\}}$ for some $\alpha_0\in(0,1)$,
then $\rho$ reduces to the left-quantile at level $\alpha_0$. 

\end{example}

\begin{example}[$\Lambda$-quantile]\label{ex-LambdaVaR}
Let $\psi(x,p)=x-\infty\id_{\{p\ge \Lambda(x)\}}$, where $\Lambda$ is a decreasing function from $\R$ to $[0,1]$. In this case, we have
\begin{align*}
\rho(F)=\sup\{x: F(x)<\Lambda(x)\},~~F\in\mathcal M_C.
\end{align*}
This is exactly the $\Lambda$-quantile.
While $\psi$ may not be increasing or lower semicontinuous in the first argument, we can define $\widetilde{\psi}$ as the form in \eqref{eq-deLS}, i.e.,
\begin{align*}
\widetilde{\psi}(x,p)=\sup_{t<x}\psi(t,p)=\sup\{t\in(-\infty,x):\Lambda(t)\le p\},
\end{align*}
By the arguments in the proof of Theorem \ref{th-main} (ii) $\Rightarrow$ (iii) in Section \ref{sec:characterization}, $\widetilde{\psi}$ is increasing and lower semicontinuous in the first argument, and
$
\rho(F)=\sup_{x\in\R}\widetilde{\psi}(x,F(x))
$
for all $F\in\mathcal M_C$.

Theorem 3.1 of \cite{HWWX21} gave a representation of $\Lambda$-quantiles based on the usual quantiles:
\begin{align}\label{eq-LambdaVaR-VaR}
\rho(F)=\sup_{x\in\R}\{F^{-1}(\Lambda(x))\wedge x\}
=\inf_{x\in\R}\{F^{-1}(\Lambda(x))\vee x\},~~F\in\mathcal M_C.
\end{align}
In particular, if $\Lambda(x)=\alpha_0$ for all $x\in\R$ with some $\alpha_0\in(0,1)$, then $\rho$ reduces to the left-quantile at level $\alpha_0$.

\end{example}

For a nonempty set $\mathcal X$, the \emph{super level set} of a functional $f:\mathcal X\to\R$ is defined as $\{x\in\mathcal X: f(x)\ge t\}$ for $t\in\R$.
Figure \ref{fig-SLS} shows the super level set of $\psi$ in Examples
\ref{ex-VaR}, \ref{ex-lossVaR} and \ref{ex-LambdaVaR}. For any threshold $t\in\R$, 
the super level set is the region below a curve.
In particular, the super level sets of $\Lambda$-quantile are L-shaped.

\usetikzlibrary{intersections} 
\usepgfplotslibrary{fillbetween}
\usetikzlibrary{patterns}
\pgfplotsset{compat=1.16}
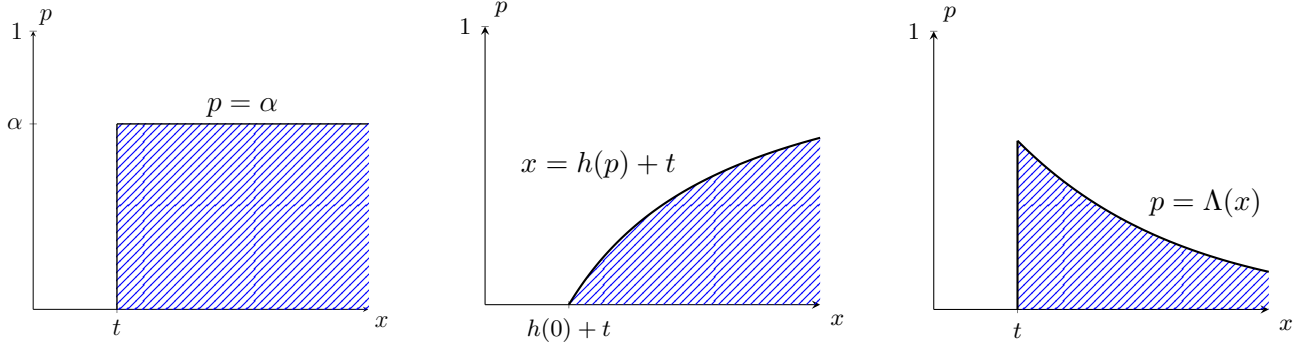
\begin{figure}[t]
\centering
\begin{minipage}{0.28\textwidth}
\begin{tikzpicture}[scale=0.65]
\begin{axis}[
    xlabel={$x$},
    ylabel={$p$},
    xmin=0, xmax=2,
    ymin=0, ymax=1,
    axis lines=middle,
    xlabel style={at={(ticklabel* cs:1)},anchor=north west},
    ylabel style={at={(ticklabel* cs:1)},anchor=south west},
    xtick={0.5},
    xticklabels={$t$},
    ytick={0,2/3,1},
    yticklabels={0,$\alpha$,1},
    tick label style={font=\Large},
    label style={font=\Large},
    clip=false,
]
\path[pattern=north east lines, pattern color=black] (axis cs:0.5,0) -- (axis cs:0.5,2/3) -- (axis cs:2,2/3) -- (axis cs:2,0) -- cycle;
\draw[thick] (axis cs:0.5,0) -- (axis cs:0.5,2/3);
\draw[thick] (axis cs:0.5,2/3) -- (axis cs:2,2/3)
node[pos=0.5,anchor=south] {\LARGE $p=\alpha$};
\end{axis}
\end{tikzpicture}
\end{minipage}
\hfill
\begin{minipage}{0.28\textwidth}
\begin{tikzpicture}[scale=0.65]
\begin{axis}[
    xlabel={$x$},
    ylabel={$p$},
    xmin=0, xmax=2,
    ymin=0, ymax=1,
    axis lines=middle,
    xlabel style={at={(ticklabel* cs:1)},anchor=north west},
    ylabel style={at={(ticklabel* cs:1)},anchor=south west},
    xtick={0.5},
    xticklabels={$h(0)+t$},
    ytick={0,1},
    yticklabels={0,1},
    tick label style={font=\footnotesize},
    label style={font=\footnotesize},
    domain=0.5:2,
    samples=100,
    clip=false,
]
\addplot+[mark=none, name path=f, thick, color=black] {1-1/(x+0.5)} node[pos=0.5,anchor=south east] {$x=h(p)+t$};
\path[name path=axis] (axis cs:0.5,0) -- (axis cs:2,0);
\addplot+[black!50, pattern=north east lines, pattern color=black] fill between[of=f and axis, soft clip={domain=0.5:2}];
\end{axis}
\end{tikzpicture}
\end{minipage}
\hfill
\begin{minipage}{0.28\textwidth}
\begin{tikzpicture}[scale=0.65]
\begin{axis}[
    xlabel={$x$},
    ylabel={$p$},
    xmin=0, xmax=2,
    ymin=0, ymax=1,
    axis lines=middle,
    xlabel style={at={(ticklabel* cs:1)},anchor=north west},
    ylabel style={at={(ticklabel* cs:1)},anchor=south west},
    xtick={0.5},
    xticklabels={$t$},
    ytick={0,1},
    yticklabels={0,1},
    tick label style={font=\footnotesize},
    label style={font=\footnotesize},
    domain=0.5:2,
    samples=100,
    clip=false,
]
\addplot+[mark=none, name path=f, thick, color=black] {exp(-x)} node[pos=0.5,anchor=south west] {$p=\Lambda(x)$};
\path[name path=axis] (axis cs:0.5,0) -- (axis cs:2,0);
\addplot+[black!50, pattern=north east lines, pattern color=black] fill between[of=f and axis, soft clip={domain=0.5:2}];
\draw[thick] (axis cs:0.5,0) -- (axis cs:0.5,0.6065);
 \end{axis}
 

\end{tikzpicture}
\end{minipage}
\caption{The super level set of $\psi$ that induces VaR, benchmark-loss VaR and $\Lambda$-quantile. Left: VaR ($\psi(x,p)=x-\infty\id_{\{p\ge \alpha\}}$); Middle: benchmark-loss VaR ($\psi(x,p)=x-h(p)$); Right: $\Lambda$-quantile ($\psi(x,p)=x-\infty\id_{\{p\ge \Lambda(x)\}}$).}\label{fig-SLS}
\end{figure}

\begin{remark}
\color{black} Another popular risk measure used in finance is the Expected Shortfall (ES), defined as
$$\rho(F)= \frac{1}{1-\alpha}\int_\alpha^1 F^{-1}(\beta)\d \beta ,$$
where $\alpha \in(0,1)$ is a parameter.
It is straightforward to check that ES for any $\alpha\in(0,1)$ does not satisfy either Axiom \ref{ax:E} or Axiom \ref{ax:MS}. Nevertheless, ES satisfies  a weak notion of max-stability with respect to SSD over a convex set of distributions, as shown by \citet[Theorem 1]{MWW22}. See  \cite{KZ24} for characterizing  max-stability with respect to SSD. 
\end{remark}



\section{A characterization of \texorpdfstring{$\Lambda$}{Lambda}-quantile}
\label{sec:lambda}

\subsection{Representation}
With max-stability and min-stability characterized in Theorem \ref{th-main} and Proposition \ref{prop-mainMinS}, 
one may wonder which mappings satisfy both conditions. Clearly, 
for a decreasing function $g:[0,1]\to \R$ and $x_0\in \R$ 
the mapping $\rho:F\mapsto g(F(x_0))$
satisfies both Axioms \ref{ax:E} and \ref{ax:MS}, which we saw in Example \ref{ex:F}. 
This mapping does not satisfy \ref{ax:ND}.

It turns out that the $\Lambda$-quantiles in Example \ref{ex-LambdaVaR} are the unique class of functionals that satisfy Axioms \ref{ax:ND}, \ref{ax:LS}, \ref{ax:E} and \ref{ax:MS}.

\begin{theorem}
\label{th:main-MM}
For $\rho:\mathcal M_C\to\R$,   Axioms \ref{ax:ND}, \ref{ax:LS}, \ref{ax:E} and \ref{ax:MS} hold if and only if there exists 
a strictly increasing function $f:\R\to\R$ and 
a decreasing function $\Lambda:\R\to(0,1]$ such that $\rho(F)=\sup\{f(x):F(x)<\Lambda(x)\}$ for all $F\in\mathcal M_C$.
 \end{theorem}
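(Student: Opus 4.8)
The plan is to prove the two implications separately, using Theorem~\ref{th-main} for the ``if'' part and a direct argument for the ``only if'' part. For the \emph{if} direction, suppose $\rho(F)=\sup\{f(x):F(x)<\Lambda(x)\}$ with $f:\R\to\R$ strictly increasing and $\Lambda:\R\to(0,1]$ decreasing. Then $\rho(F)=\sup_{x\in\R}\psi(x,F(x))$ with $\psi(x,p)=f(x)-\infty\,\id_{\{p\ge\Lambda(x)\}}$, and I would check that $\psi$ meets the hypotheses of statement (ii) of Theorem~\ref{th-main}: it is decreasing and lower semicontinuous in $p$ (a two-valued step function dropping to $-\infty$ at $p=\Lambda(x)$), $\psi(x,0)=f(x)$ is strictly increasing because $\Lambda(x)>0$, and $\psi(x,1)=-\infty$ because $\Lambda(x)\le1$, so Theorem~\ref{th-main} gives Axioms~\ref{ax:ND}, \ref{ax:LS} and \ref{ax:E}. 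For Axiom~\ref{ax:MS}, since $(F\wedge G)(x)=\max\{F(x),G(x)\}$ one has $A_{F\wedge G}=A_F\cap A_G$, where $A_H:=\{x:H(x)<\Lambda(x)\}$; as $H$ is increasing and $\Lambda$ decreasing, each $A_H$ is a down-set in $\R$, so $A_F$ and $A_G$ are nested, and if $A_F\subseteq A_G$ then $\rho(F\wedge G)=\sup_{A_F}f=\rho(F)=\rho(F)\wedge\rho(G)$, the last step because $\sup_{A_F}f\le\sup_{A_G}f$ (the case $A_G\subseteq A_F$ is symmetric).

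For the \emph{only if} direction I would argue directly, in three stages. Set $f(x):=\rho(\delta_x)$, which is strictly increasing by Property~\ref{ax:MC} and left-continuous by Axiom~\ref{ax:LS}, and write $R(a,b,p):=\rho(p\delta_a+(1-p)\delta_b)$ for $a\le b$, $p\in[0,1]$. Since Axiom~\ref{ax:E} implies consistency with $\preceq_1$, $R$ is increasing in $a$ and in $b$, decreasing in $p$, with $f(a)\le R(a,b,p)\le f(b)$. The key computation is that for $a<c<b$,
\[
(p\delta_a+(1-p)\delta_b)\wedge\delta_c=p\delta_a+(1-p)\delta_c,\qquad (p\delta_a+(1-p)\delta_b)\vee\delta_c=p\delta_c+(1-p)\delta_b,
\]
so Axioms~\ref{ax:MS} and \ref{ax:E} give $R(a,c,p)=R(a,b,p)\wedge f(c)$ and $R(c,b,p)=R(a,b,p)\vee f(c)$. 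From the first identity and monotonicity of $R$ in $b$, a short case distinction (whether $R(a,c,p)<f(c)$ for some $c>a$, or $R(a,c,p)=f(c)$ for all $c>a$) yields $R(a,b,p)=w(a,p)\wedge f(b)$ with $w(a,p):=\sup_{b>a}R(a,b,p)$; substituting into the second identity and letting $b\to\infty$ gives $w(c,p)=w(a,p)\vee f(c)$ for $a<c$, and then letting $a\to-\infty$ gives $w(a,p)=g(p)\vee f(a)$ with $g(p):=\inf_{a}w(a,p)$. Hence
\[
R(a,b,p)=f(a)\vee\bigl(g(p)\wedge f(b)\bigr),\qquad a\le b,
\]
where $g:[0,1]\to[\inf f,\sup f]$ is decreasing with $g(0)=\sup f$, $g(1)=\inf f$, and where Axiom~\ref{ax:LS} applied to $q\mapsto q\delta_a+(1-q)\delta_b$ forces $R(a,b,\cdot)$, and hence $g$, to be right-continuous. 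I would then set $\Lambda(x):=\sup\{p\in[0,1]:g(p)>f(x)\}$, check that $\Lambda$ is decreasing and valued in $(0,1]$ (using $g(1)=\inf f\le f(x)$ and that a strictly increasing function on $\R$ does not attain its supremum), and verify—using right-continuity of $g$ and left-continuity of $f$—that $\rho$ agrees on $\mathcal M_{D,2}$ with $\rho_\Lambda:F\mapsto\sup\{f(x):F(x)<\Lambda(x)\}$.

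Stage two extends this to $\mathcal M_D$: every $F=\sum_{i=1}^n p_i\delta_{x_i}$ with $x_1<\dots<x_n$ and $p_i>0$ is a finite $\preceq_1$-join of two-point distributions, $F=G_1\vee\dots\vee G_{n-1}$ with $G_j:=F(x_j)\delta_{x_1}+(1-F(x_j))\delta_{x_{j+1}}\in\mathcal M_{D,2}$ (one checks $\min_j G_j(x)=F(x)$ for all $x$); since the \emph{if} direction shows $\rho_\Lambda$ also satisfies Axiom~\ref{ax:E}, applying Axiom~\ref{ax:E} to both $\rho$ and $\rho_\Lambda$ gives $\rho(F)=\bigvee_j\rho(G_j)=\bigvee_j\rho_\Lambda(G_j)=\rho_\Lambda(F)$. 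Stage three passes to $\mathcal M_C$: given $F\in\mathcal M_C$, choose $F_n\in\mathcal M_D$ with $F_n(x)\ge F(x)$ for all $x$ and $F_n\lawto F$ (round the support of $F$ down to a $1/n$-grid); then $F_n\preceq_1 F$ gives $\rho(F_n)\le\rho(F)$ and $\rho_\Lambda(F_n)\le\rho_\Lambda(F)$, while Axiom~\ref{ax:LS} gives $\liminf_n\rho(F_n)\ge\rho(F)$ and $\liminf_n\rho_\Lambda(F_n)\ge\rho_\Lambda(F)$, so $\rho(F)=\lim_n\rho(F_n)=\lim_n\rho_\Lambda(F_n)=\rho_\Lambda(F)$ by stage two.

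The step I expect to be the main obstacle is the two-point analysis of stage one: deriving the product form $R(a,b,p)=f(a)\vee(g(p)\wedge f(b))$ from the two lattice identities together with the monotonicity of $R$, and—more delicately—pinning down $\Lambda$ so that $\rho_\Lambda$ matches $\rho$ exactly, in particular at the discontinuity points of $f$, where one must argue carefully that $g$ does not take ``intermediate'' values incompatible with a supremum-of-$f$-values representation. The left-continuity of $f$ and the right-continuity of $g$, both forced by Axiom~\ref{ax:LS}, are precisely what make this matching go through; without lower semicontinuity the representation would only be correct up to a one-sided limit.
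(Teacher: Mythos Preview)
Your ``if'' direction and the paper's are essentially identical. For the ``only if'' direction your route is genuinely different: the paper first invokes Theorem~\ref{th-main} to obtain a representing $\psi$ and then analyzes $\psi$ under Axiom~\ref{ax:MS}, whereas you work directly on two-point distributions and use the two lattice identities coming from Axioms~\ref{ax:E} and~\ref{ax:MS} to derive the neat closed form $R(a,b,p)=f(a)\vee\bigl(g(p)\wedge f(b)\bigr)$. That derivation is correct and more transparent than the paper's; Stages~2 and~3 are also fine \emph{provided} Stage~1 delivers $\rho=\rho_\Lambda$ on $\mathcal M_{D,2}$.

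The gap is exactly where you flag it, but your proposed resolution is wrong. Left-continuity of $f$ and right-continuity of $g$ do \emph{not} force $g(p)$ to avoid the jumps of $f$. Concretely, with your $f(x)=\rho(\delta_x)$ and your $\Lambda(x)=\sup\{p:g(p)>f(x)\}$, one has $p<\Lambda(x)\Leftrightarrow g(p)>f(x)$, and for $F=p\delta_a+(1-p)\delta_b$,
\[
\rho_\Lambda(F)=f(a)\vee\sup\{f(x):a\le x<b,\ f(x)<g(p)\}.
\]
When $f(a)<g(p)<f(b)$ this equals $g(p)$ only if $g(p)=\sup\{f(x):f(x)<g(p)\}$, i.e.\ only if $g(p)$ is a left-accumulation point of the range of $f$. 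If $f$ has a jump at $c$ and $g(p_0)\in(f(c),f(c+))$, then $\rho(p_0\delta_a+(1-p_0)\delta_b)=g(p_0)$ while $\rho_\Lambda(p_0\delta_a+(1-p_0)\delta_b)=f(c)\ne g(p_0)$. And nothing in your argument rules this out: take any strictly increasing left-continuous $f$ with a jump at $c$ and any continuous strictly decreasing $g:[0,1]\to[-\infty,\infty]$, and set $\rho(F)=\sup_x\bigl(g(F(x))\wedge f(x)\bigr)$. By Theorem~\ref{th-main} (with $\psi(x,p)=g(p)\wedge f(x)$) this satisfies Axioms~\ref{ax:ND}, \ref{ax:LS}, \ref{ax:E}; and one checks directly that it satisfies Axiom~\ref{ax:MS} as well (for decreasing $A,B$ and increasing $f$, taking $x=\min(x_0,y_0)$ with $A(x_0),f(x_0)>\alpha-\epsilon$ and $B(y_0),f(y_0)>\alpha-\epsilon$ shows $\sup_x\min(A,B,f)=\min(\sup_x\min(A,f),\sup_x\min(B,f))$). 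Yet for this $\rho$, $g$ sweeps through the gap $(f(c),f(c+))$ continuously, and your $\rho_\Lambda$ disagrees with $\rho$ on two-point masses there. The paper's proof handles the same issue by a separate argument bounding $h(p)$ between $\rho(\Lambda^{-1+}(p))$ and $\rho_+(\Lambda^{-1}(p))$ and then identifying these; you would need an analogue of that step, and the continuity properties you cite are not a substitute for it.
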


\textcolor{black}{The proof of Theorem \ref{th:main-MM} is   in Section \ref{sec:pf2}.}



The class of $\Lambda$-quantiles was characterized by \cite{BP22} with several properties. Among them, the key property is called locality: for any interval $(x,y)\subseteq \R$, $\rho(F)=\rho(G)$ if $\rho(F)\in(x,y)$ and $F=G$ on $(x,y)$. 
This property implies that the value of the functional $\rho(F)$ depends only on the values of the distribution function $F$ within arbitrarily small open neighborhoods 
around the functional's value.
\textcolor{black}{To compare the financial interpretations for a risk measure $\rho$ used by a regulator, locality means that the regulator is insensitive to any changes in the loss distribution function $F$ that is outside a neighborhood of  $\rho(F)$. 
Axioms \ref{ax:E} and \ref{ax:MS} mean that for any two comonotonic loss random variables $X$ and $Y$, if both are acceptable, then so is their maximum; if both are  unacceptable, then so is their minimum. 
Thus, the financial interpretations are quite different, although mathematically they lead to the same class of functionals.} 

\subsection{Connection to the result of \cite{CGR08}} 

\citet[Theorem 18]{CGR08} gave a characterization of functionals 
via Sugeno integrals on lattices of general functions taking values in $[0,1]$ (not necessarily distribution functions) satisfying max-stability and min-stability for comonotonic functions, as well as some other standard properties. 
Denote by $\mathcal M_{[0,1]}$ the set of all distributions with support in $[0,1]$.
Our Theorem \ref{th:main-MM} is closely related to that result if we consider all distributions in $\mathcal M_{[0,1]}$, but it is unclear how their result can be generalized to $\mathcal M_C$.

It takes some effort to see the connection under the setting of  $\mathcal M_{[0,1]}$, which we explain below. 
Let $\rho:\mathcal M_{[0,1]}\to\R$, and suppose that $\mathcal Q_{[0,1]}$ is the set of all right-quantiles with  range in $[0,1]$. 
Note that 
\begin{align}\label{eq-MQeq}
\mathcal M_{[0,1]}=\{Q^{-1+}: Q\in\mathcal Q_{[0,1]}\},
\end{align}
where $Q^{-1+}$ is the right-inverse function of $Q$.
Due to this one-to-one correspondence between $\mathcal M_{[0,1]}$ and $\mathcal Q_{[0,1]}$, we can define a mapping 
$\widetilde{\rho}:\mathcal Q_{[0,1]}\to\R$ as $\widetilde{\rho}(Q)=\rho(Q^{-1+})$ for all $Q\in\mathcal Q_{[0,1]}$.
Based on this,
our axioms on $\rho$ can be equivalently reformulated in terms of right-quantiles on $\widetilde{\rho}$. In particular, max-stability and min-stability for $\widetilde{\rho}$ is defined as $\widetilde{\rho}(Q_1\vee Q_2)=\widetilde{\rho}(Q_1)\vee \widetilde{\rho}(Q_2)$ and $\widetilde{\rho}(Q_1\wedge Q_2)=\widetilde{\rho}(Q_1)\wedge \widetilde{\rho}(Q_2)$, respectively, for all $Q_1,Q_2\in\mathcal Q_{[0,1]}$, where on the space of quantiles, the maximum and the minimum are understood pointwise. To obtain the representation of $\rho$, it is equivalent to consider the representation of $\widetilde{\rho}$ instead. 
 Since right-quantiles are all increasing, implying that any pair of them are comonotonic,  
the representation of $\widetilde{\rho}$ satisfying max-stability, min-stability, lower semicontinuous and $\widetilde{\rho}(x)=x$ for all $x\in[0,1]$
can be obtained from \citet[Theorem 18]{CGR08}:
\begin{align}\label{eq-CGR1}
\widetilde{\rho}(Q)=\sup_{x\in[0,1]}\{x\wedge v(\{\alpha:Q(\alpha)\ge x\})\},~~~Q\in\mathcal Q_{[0,1]},
\end{align}
where $v$ is an inner-continuous normalized capacity on $([0,1],\mathcal B([0,1]))$.\footnote{Let $(\Omega,\mathcal F)$ be a measurable space. A set function $v:\mathcal F\to [0,1]$ is a normalized capacity if $v(\emptyset)=1-v(\Omega)=0$ and $v(A)\le v(B)$ whenever $A\subseteq B$. We say $v$ is inner-continuous if $A_n\subseteq A_{n+1}$ for $n\in\N$ and $\bigcup_{n=1}^\infty A_n=A$ imply
$\lim_{n\to\infty} v(A_n)=v(A)$.} 
Recall the relation in \eqref{eq-MQeq} and note that
$Q(\alpha)\ge x$ is equivalent to $\alpha\ge Q^{-1+}(x)$ for $\alpha,x\in[0,1]$.
The representation of $\rho$ can be derived from \eqref{eq-CGR1}:
\begin{align*}
\rho(F)=\sup_{x\in[0,1]} \{x\wedge v(\{\alpha: \alpha\ge F(x)\})\}
=\sup_{x\in[0,1]} \{x\wedge v([F(x),1])\},~~~F\in\mathcal M_{[0,1]}.
\end{align*}
Denote by $f(p)=v([p,1])$ for $p\in[0,1]$. The above formulation becomes
\begin{align*}
\rho(F)
=\sup_{x\in[0,1]} \{x\wedge f(F(x))\},~~~F\in\mathcal M_{[0,1]}.
\end{align*}
By the property of inner-continuous normalized capacity, we have that $f:[0,1]\to[0,1]$ is decreasing and right-continuous with $f(1)=1-f(0)=0$. Note that $F$ is increasing and right-continuous, which further implies that $x\mapsto f(F(x))$ is decreasing and right-continuous. Hence, it is straightforward to check that 
\begin{align*}
\rho(F)=\sup\{x: x<f(F(x))\}=\sup\{x: F(x)<f^{-1}(x)\},~~~F\in\mathcal M_{[0,1]}.
\end{align*}
where $f^{-1}$ can be any inverse function of $f$. This yields the representation of a $\Lambda$-quantile. Their proof relies on the fact that any finitely supported distribution in $\mathcal M_{[0,1]}$  can be obtained by  repeated applications of maxima and minima of two-point distributions on $\{0,1\}$ and degenerate distributions at $x\in [0,1]$.
This argument requires that all functionals take values within a compact interval, and it is unclear how it can be generalized to   $\R$. In contrast, our proof differs significantly and is capable of handling $\mathcal M_C$.


\section{Conclusion}\label{sec:con}

We provide a characterization of functionals satisfying three axioms of monotonicity, lower semicontinuity, and max-stability. 
Such functionals include the risk measures,  $\Lambda$-quantiles and benchmark-loss VaR  as special cases. 
With max-stability replaced by  min-stability, a parallel characterization result is also established. Additionally, we demonstrate that $\Lambda$-quantiles (possibly preceding by strictly increasing function) are the only cases that satisfy both max-stability and min-stability.

We focus on max-stability and {\color{black}min}-stability with respect to FSD in this paper, but this property can be generally defined on other spaces. For instance, 
in financial mathematics, a risk measure is a functional that takes random variables as input (\cite{FS16}). 
In this setting, max-stability can be defined as $\rho(X\vee Y) = \rho(X) \vee \rho(Y)$, as studied by \cite{KZ21},  where $X\vee Y$ represents the pointwise maximum of $X$ and $Y$. 
\cite{KZ21} showed that in the above setting, decision makers who adopt max-stable risk measures base their assessment on evaluation under the worst-case scenarios. 
Additionally, \cite{K79} considered complete and transitive binary preference relations over sets. The property stated in Equation (1.2) of \cite{K79} 
characterizes all preference relations that are governed by maximization of a preference over the elements.
\textcolor{black}{Max-stability with respect to some other partial orders  is studied by \cite{KZ24}.}

\section{Proofs}\label{sec:mainproof}

\subsection{Proof of (i) $\Rightarrow$ (ii) of Theorem \ref{th-main}}
In this section, we assume that $\rho$ satisfies Axioms \ref{ax:ND}, \ref{ax:LS} and \ref{ax:E}. Recall that Axiom \ref{ax:E} is stronger than consistency with $\preceq_1$, and 
together with  Axiom \ref{ax:ND}, $\rho$ satisfies Property \ref{ax:MC}.
We will frequently use these two properties in the proof.

\subsubsection{Proof on $\mathcal M_{D,2}$}
In this step, our main purpose is to construct a function $\psi$ that satisfies all conditions in Theorem \ref{th-main} (ii) such that $\rho(F)=\sup_{x\in\R}\psi(x,F(x))$ for all $F\in \mathcal M_{D,2}$. 



For $c\in\R$, we abuse the notation by setting $\rho(c)=\rho(\delta_c)$. It follows from consistency with $\preceq_1$ and Axiom \ref{ax:LS} that the mapping $c\mapsto\rho(c)$ is increasing and left continuous on $\R$.
Denote by $A=\{(x,y,p): x,y\in\R,~x\le y,~p\in[0,1]\}$. Define $f:A\to \R$ as a function such that 
\begin{align}\label{eq-f}
f(x,y,p)=\rho(p\delta_x+(1-p)\delta_y),~~(x,y,p)\in A.
\end{align}
 By consistency with $\preceq_1$ and Axiom \ref{ax:LS},
we have the following properties of $f$.
\begin{itemize}
	\item[(i)] $f(x,x,p)=\rho(x)$; 
	$f(x,y,0)=\rho(y)$; $f(x,y,1)=\rho(x)$.
	\item[(ii)] $f(x,y,p)$ is increasing and left continuous in $x$. 
	\item[(iii)] $f(x,y,p)$ is increasing and left continuous in $y$. 
	\item[(iv)] $f(x,y,p)$ is decreasing and right continuous in $p$. 
\end{itemize} 
Further, we use the convention that $\inf \emptyset=\infty$ and define
\begin{align}\label{eq-h}
	h(x,p)=\inf\{y\in[x,\infty):~f(x,y,p)>\rho(x)\},~~x\in\R,~p\in[0,1].
\end{align}
If $h(x,p)<\infty$, then we have
\begin{align}\label{eq:property1}
 f(x,y,p)=\rho(x) \mbox{~if~}  y\in[x,h(x,p)] \mbox{~~~and~~~} f(x,y,p)>\rho(x) \mbox{~if~}  y>h(x,p) ,
 \end{align}
because $f(x,x,p)=\rho(x)$ and $f(x,y,p)$ is increasing and left continuous in $y$.
It follows from Property \ref{ax:MC} that $h(x,0)=x$, and it is clear that $h(x,1)=\infty$.
The next result gives other properties of $h$.
Proofs of all propositions in this section are provided in Section \ref{sec:h}.
\begin{proposition}\label{prop:h}
Let $h$ be defined in \eqref{eq-h}. The following statements hold.
\begin{itemize}
	\item[(i)] $h(x,p)$ is increasing in $x$.
	\item[(ii)] $h(x,p)$ is increasing and upper semicontinuous in $p$.
\end{itemize}
\end{proposition}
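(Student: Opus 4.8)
\textbf{Proof plan for Proposition \ref{prop:h}.}

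The plan is to prove the two monotonicity/semicontinuity statements for $h$ directly from the corresponding properties of $f$ recorded in items (i)--(iv) above, together with Axiom \ref{ax:E} applied to suitable pairs of two-point distributions. For part (i), I would fix $x_1 < x_2$ and show $h(x_1,p) \le h(x_2,p)$. The key observation is that $p\delta_{x_1}+(1-p)\delta_y$ and $p\delta_{x_2}+(1-p)\delta_y$ are ordered in $\preceq_1$ when $y \ge x_2$, but more is true: one can relate $f(x_2,y,p)$ to $f(x_1,y,p)$ and $\rho(x_2)$ via max-stability, since the join of $p\delta_{x_1}+(1-p)\delta_y$ with $\delta_{x_2}$ (for $x_1<x_2\le y$) is exactly $p\delta_{x_2}+(1-p)\delta_y$ when the probabilities line up, or more carefully one uses that $f(x_2,y,p)>\rho(x_2)$ forces $f(x_1,y,p)>\rho(x_1)$ through the relation $f(x_1,y,p)\vee \rho(x_2) = f(x_2,y,p)$ obtained from Axiom \ref{ax:E} (noting $(p\delta_{x_1}+(1-p)\delta_y)\vee \delta_{x_2}$ has cdf $\min$ of the two cdfs, which equals the cdf of $p\delta_{x_2}+(1-p)\delta_y$ precisely when $x_2 \le y$). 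From $f(x_2,y,p)>\rho(x_2)>\rho(x_1)$ we get $f(x_1,y,p)>\rho(x_1)$, hence $y \ge h(x_1,p)$ whenever $y > h(x_2,p)$, giving $h(x_1,p)\le h(x_2,p)$.

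For part (ii), monotonicity in $p$ follows because $f(x,y,p)$ is decreasing in $p$ (item (iv)): if $p_1 \le p_2$ and $f(x,y,p_2)>\rho(x)$, then $f(x,y,p_1) \ge f(x,y,p_2) > \rho(x)$, so the infimum defining $h(x,p_1)$ is over a larger set, whence $h(x,p_1) \le h(x,p_2)$. For upper semicontinuity in $p$, I would take $p_n \uparrow p$ (the only direction that matters given monotonicity, together with $p_n \downarrow p$ which is handled by monotonicity from the other side — actually upper semicontinuity means $\limsup_{p_n\to p} h(x,p_n) \le h(x,p)$, and by monotonicity it suffices to check $p_n \uparrow p$ gives $\lim h(x,p_n) \le h(x,p)$, which is automatic, so the real content is $p_n \downarrow p \Rightarrow \limsup h(x,p_n) \le h(x,p)$). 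Here I would use right continuity of $f$ in $p$ (item (iv)): suppose $y > h(x,p)$, so $f(x,y,p) > \rho(x)$; by right continuity $f(x,y,p_n) > \rho(x)$ for $n$ large with $p_n \downarrow p$, hence $h(x,p_n) \le y$ for large $n$, giving $\limsup_n h(x,p_n) \le y$; letting $y \downarrow h(x,p)$ yields the claim.

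The main obstacle I anticipate is part (i): establishing the precise relation between $f(x_1,y,p)$, $f(x_2,y,p)$ and $\rho(x_2)$ via Axiom \ref{ax:E} requires identifying $(p\delta_{x_1}+(1-p)\delta_y)\vee \delta_{x_2}$ correctly as a cdf and checking it lies in $\mathcal M_{D,2}$ with the right parameters — the cdf of $p\delta_{x_1}+(1-p)\delta_y$ is $p\id_{[x_1,y)}+\id_{[y,\infty)}$ and the cdf of $\delta_{x_2}$ is $\id_{[x_2,\infty)}$, whose minimum (for $x_1<x_2\le y$) is $\id_{[x_2,\infty)}$ on $[x_2,y)$ equal to... one must be careful: the pointwise minimum is $0$ on $(-\infty,x_2)$, then $\min\{p,1\}=p$? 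No — on $[x_2,y)$ the first cdf is $p$ and the second is $1$, so the minimum is $p$; on $[y,\infty)$ both are $1$. So the minimum is $p\id_{[x_2,y)}+\id_{[y,\infty)}$, which is exactly the cdf of $p\delta_{x_2}+(1-p)\delta_y$, as desired. Thus $f(x_2,y,p) = f(x_1,y,p) \vee \rho(x_2)$ for $x_1 < x_2 \le y$, and the argument goes through. The semicontinuity in part (ii) is comparatively routine once the right convergence direction is pinned down. A minor point to handle throughout is the convention $\inf\emptyset = \infty$ and the boundary cases $p=0$ (where $h(x,0)=x$ by Property \ref{ax:MC}) and $p=1$ (where $h(x,1)=\infty$).
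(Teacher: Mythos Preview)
Your proposal is correct and takes essentially the same approach as the paper. For part (i) you derive the identity $f(x_2,y,p)=f(x_1,y,p)\vee\rho(x_2)$ from Axiom~\ref{ax:E} applied to $(p\delta_{x_1}+(1-p)\delta_y)\vee\delta_{x_2}$ and use it directly, whereas the paper uses the same identity inside a contradiction argument; for part (ii) both your monotonicity and right-continuity arguments match the paper's exactly, with the only cosmetic difference that the paper phrases the semicontinuity step as a contradiction while you argue directly by letting $y\downarrow h(x,p)$.
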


For $x\in \R$, denote by
\begin{align}\label{eq-Ax}
 A_x=\{(y,p):y>h(x,p),~p\in[0,1]\}
\end{align}
 and define $H_x:A_x\to\R$ as 
\begin{align}\label{eq-H}
H_x(y,p)=f(x,y,p),~~ (y,p)\in A_x.
\end{align} 
It is straightforward to check that $(y,0)\in A_x$ for any $x,y\in\R$ with $x< y$, and $H_x(y,0)=\rho(y)$.

By the definitions of $f$, $h$ and $H_x$
  in \eqref{eq-f}, \eqref{eq-h} and \eqref{eq-H}, respectively, we can immediately obtain   
\begin{align}\label{eq-D2-1}
	f(x,y,p)=\rho(x)\id_{\{(y,p)\notin A_x\}}+H_x(y,p)\id_{\{(y,p)\in A_x\}},~~(x,y,p)\in A.
\end{align}

Further, we present the properties of the function defined in \eqref{eq-H}.
\begin{proposition}\label{prop:H}
Let $A_x$ and $H_x$ be defined in \eqref{eq-Ax} and \eqref{eq-H}, respectively.
The following statements hold.
\begin{itemize}
	\item[(i)] For fixed $x\in\R$ and $p\in[0,1]$, $H_x(y,p)$ is increasing and left continuous in $y$ on the interval $(h(x,p),\infty)$.
	
	\item[(ii)] For fixed $x,y\in\R$ with $x\le y$, $H_x(y,p)$ is decreasing and right continuous in $p$ on the interval $\{p\in[0,1]: y>h(x,p)\}$.
	
	\item[(iii)] For fixed $y\in\R$ and $p\in[0,1]$, $H_x(y,p)$ is a constant in $x$ on the interval $\{x\in \R: y>h(x,p)\}$.

	\item[(iv)] For $x_1,x_2\in\R$ with $x_1<x_2$, we have 
	$A_{x_2}\subseteq A_{x_1}$.
	If $(y,p)\in A_{x_1}\setminus A_{x_2}$, then $$H_{x_1}(y,p)=f(x_1,y,p)\le f(x_2,y,p)= \rho(x_2).$$
\end{itemize}
\end{proposition}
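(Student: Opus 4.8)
The plan is to get parts (i), (ii) and (iv) quickly by transporting the already-established properties of $f$ and $h$ through the definitions \eqref{eq-Ax}--\eqref{eq-H}, and to concentrate the real argument on part (iii), which is the one place where Axiom \ref{ax:E} (rather than mere consistency with $\preceq_1$) enters in an essential way. For (i), note that $H_x=f$ on $A_x$ and that, for fixed $x$ and $p$, the section $\{y:(y,p)\in A_x\}=(h(x,p),\infty)$ is an interval contained in $[x,\infty)$; hence (i) is simply the restriction of property (iii) of $f$ (increasing and left continuous in $y$) to this interval. For (ii), for fixed $x\le y$ the section $\{p\in[0,1]:y>h(x,p)\}$ is a down-set in $[0,1]$ because $p\mapsto h(x,p)$ is increasing by Proposition \ref{prop:h}(ii), hence a sub-interval; so (ii) is the restriction of property (iv) of $f$ (decreasing and right continuous in $p$), right continuity on the subset being inherited from right continuity on all of $[0,1]$.

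For (iv), the inclusion $A_{x_2}\subseteq A_{x_1}$ when $x_1<x_2$ is immediate from Proposition \ref{prop:h}(i): if $y>h(x_2,p)$ then $y>h(x_1,p)$ since $h(x_1,p)\le h(x_2,p)$. Now take $(y,p)\in A_{x_1}\setminus A_{x_2}$ with $x_2\le y$, so $h(x_1,p)<y\le h(x_2,p)$ and thus $y\in[x_2,h(x_2,p)]$. Then \eqref{eq:property1} (or, if $h(x_2,p)=\infty$, the definition \eqref{eq-h} together with monotonicity of $f$ in $y$) gives $f(x_2,y,p)=\rho(x_2)$, and since $x_1<x_2\le y$ and $f(\cdot,y,p)$ is increasing by property (ii) of $f$, we obtain $H_{x_1}(y,p)=f(x_1,y,p)\le f(x_2,y,p)=\rho(x_2)$.

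The substantive step is (iii). Fix $y$ and $p$, and let $x_1<x_2$ both lie in $\{x:y>h(x,p)\}$; since $h(x_i,p)\ge x_i$, this forces $x_1<x_2<y$, so $(y,p)\in A_{x_1}\cap A_{x_2}$. Set $F_i=p\delta_{x_i}+(1-p)\delta_y$. The key observation is the identity $F_2=F_1\vee\delta_{x_2}$, checked at the level of cdfs: $F_1\vee\delta_{x_2}(z)=\min\{F_1(z),\id_{\{z\ge x_2\}}\}$ equals $0$ for $z<x_2$, equals $p$ for $x_2\le z<y$ (here $x_2<y$ is used), and equals $1$ for $z\ge y$, i.e., it coincides with $F_2$. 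Applying Axiom \ref{ax:E} then yields
\[
f(x_2,y,p)=\rho(F_2)=\rho(F_1)\vee\rho(\delta_{x_2})=f(x_1,y,p)\vee\rho(x_2).
\]
On the other hand, $y>h(x_2,p)$ together with \eqref{eq:property1} gives $f(x_2,y,p)>\rho(x_2)$; comparing with the display, this forces $f(x_1,y,p)>\rho(x_2)$, so the maximum on the right equals $f(x_1,y,p)$, and hence $f(x_2,y,p)=f(x_1,y,p)$, i.e., $H_{x_2}(y,p)=H_{x_1}(y,p)$. Since $x_1<x_2$ were arbitrary in the section, $H_{\cdot}(y,p)$ is constant there. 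The main obstacle is precisely this identity $F_2=F_1\vee\delta_{x_2}$: recognizing that shifting the lower atom of a two-point distribution to the right realizes it as an FSD-join with a point mass is what lets Axiom \ref{ax:E} act, and the standing hypothesis $(y,p)\in A_{x_2}$ is exactly what is needed to discard the spurious term $\rho(x_2)$ from the resulting maximum via \eqref{eq:property1}; parts (i), (ii) and (iv) are then routine bookkeeping.
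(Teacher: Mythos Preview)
Your proof is correct and follows essentially the same approach as the paper's: parts (i), (ii), (iv) are obtained by transporting the properties of $f$ and Proposition \ref{prop:h} through the definitions, and part (iii) uses the identity $p\delta_{x_2}+(1-p)\delta_y=\delta_{x_2}\vee\bigl(p\delta_{x_1}+(1-p)\delta_y\bigr)$ together with Axiom \ref{ax:E} and \eqref{eq:property1} to force $f(x_1,y,p)=f(x_2,y,p)$. Your write-up is in fact slightly more careful than the paper's in a couple of places, e.g., explicitly handling the case $h(x_2,p)=\infty$ in (iv) and verifying the cdf identity in (iii).
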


Finally, we define
\begin{align}\label{eq-psi}
\psi(y,p)=\sup_{x\in\R}\left\{H_x(y,p)\id_{\{(y,p)\in A_x\}}-\infty\id_{\{(y,p)\notin A_x\}}\right\},~~y\in\R,~p\in[0,1].
\end{align}
The main purpose is to verify that $\psi$ satisfies all conditions in Theorem \ref{th-main} and $\rho(F)=\sup_{y\in\R}\psi(y,F(y))$ for all $F\in\mathcal M_{D,2}$. We summarize these conclusions into the following result.

\begin{proposition}\label{prop:D2}
Let $f$ and
$\psi$ be defined in \eqref{eq-f} and \eqref{eq-psi}, respectively. The following statements hold.
\begin{itemize}
	\item[(i)] $\psi$ is
	 decreasing and lower semicontinuous in the second argument. 
	\item[(ii)]  $\psi(x,0)<\psi(y,0)$ for $x,y\in\R$ with $x<y$ and $\psi(x,1)=-\infty$ for any $x\in\R$.
	\item[(iii)] $\rho(F)=\sup_{y\in\R}\psi(y,F(y))$ for all $F\in\mathcal M_{D,2}$.
\end{itemize}
\end{proposition}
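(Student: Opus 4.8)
\textbf{Proof proposal for Proposition \ref{prop:D2}.}

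The plan is to extract from the functions $H_x$ and $h$ exactly the information needed about $\psi$, relying on Propositions \ref{prop:h} and \ref{prop:H} as black boxes. For parts (i) and (ii), observe that $\psi(y,p)$ is by definition a supremum over $x$ of functions of the form $H_x(y,p)\id_{\{(y,p)\in A_x\}}-\infty\id_{\{(y,p)\notin A_x\}}$, so I would first check the relevant monotonicity/semicontinuity for each fixed $x$ and then invoke the fact that a pointwise supremum of decreasing (resp. lower semicontinuous) functions is again decreasing (resp. lower semicontinuous). Fix $y$; as $p$ increases, the set $\{p : y > h(x,p)\}$ shrinks because $h(x,\cdot)$ is increasing (Proposition \ref{prop:h}(ii)), so the $x$-th term drops to $-\infty$ once $p$ crosses the threshold, and on the region where it stays finite it is decreasing in $p$ by Proposition \ref{prop:H}(ii); since $h(x,\cdot)$ is also upper semicontinuous, the set $\{p : y>h(x,p)\}$ is (relatively) open from the right in the appropriate sense, which together with right continuity of $H_x(y,\cdot)$ yields lower semicontinuity of the $x$-th term in $p$ (the $-\infty$ value only helps lower semicontinuity). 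Taking the sup over $x$ preserves both. For part (ii), the value at $p=0$ is $\psi(y,0)=\sup_{x\in\R}\{H_x(y,0)\id_{\{(y,0)\in A_x\}}-\infty\id_{\{(y,0)\notin A_x\}}\}$; since $(y,0)\in A_x$ exactly when $x<y$ (because $h(x,0)=x$) and $H_x(y,0)=\rho(y)$ there, we get $\psi(y,0)=\rho(y)$, and Property \ref{ax:MC} gives strict monotonicity $\psi(x,0)=\rho(x)<\rho(y)=\psi(y,0)$ for $x<y$. At $p=1$ we have $h(x,1)=\infty$, so $(y,1)\notin A_x$ for every $x$, hence $\psi(y,1)=-\infty$.

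Part (iii) is the substantive one. Take $F=p\delta_x+(1-p)\delta_y\in\mathcal M_{D,2}$ with $x\le y$; then $F(t)=0$ for $t<x$, $F(t)=p$ for $x\le t<y$, and $F(t)=1$ for $t\ge y$. I want to show $\rho(F)=f(x,y,p)=\sup_{t\in\R}\psi(t,F(t))$. I would split the supremum over $t$ according to these three regions. For $t\ge y$, $F(t)=1$ and $\psi(t,1)=-\infty$ by part (ii), so these contribute nothing. For $t<x$, $F(t)=0$ and $\psi(t,0)=\rho(t)$; since $\rho$ is increasing and left continuous, $\sup_{t<x}\rho(t)=\rho(x)$. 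For $x\le t<y$, $F(t)=p$ and I claim $\sup_{x\le t<y}\psi(t,p)=f(x,y,p)$; combining with $\rho(x)\le f(x,y,p)$ (since $f(x,\cdot,p)$ is increasing in its middle argument and $f(x,x,p)=\rho(x)$) would give $\sup_t\psi(t,F(t))=f(x,y,p)=\rho(F)$ as desired.

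The key remaining claim, $\sup_{x\le t<y}\psi(t,p)=f(x,y,p)$, is where the structure of $\psi$ must be used carefully, and I expect it to be the main obstacle. For the lower bound, note $(y,p)\in A_x$ iff $y>h(x,p)$; if $y>h(x,p)$ then taking $t=x$ in the sup defining $\psi(x,p)$ with the choice of index $x$ gives $\psi(x,p)\ge H_x(y,p)\cdots$ — wait, one must be careful that in \eqref{eq-psi} the first argument of $\psi$ plays the role of $y$, so I should instead evaluate $\psi(t,p)$ for $t$ slightly less than $y$ and use the index $x$: for $t\in(h(x,p),y)$ we have $(t,p)\in A_x$ and $\psi(t,p)\ge H_x(t,p)=f(x,t,p)$, and letting $t\uparrow y$ and using left continuity of $f(x,\cdot,p)$ (Proposition \ref{prop:H}(i)) pushes this to $f(x,y,p)$; the edge case $y\le h(x,p)$ (i.e. $f(x,y,p)=\rho(x)$) is covered by the $t<x$ region analysis. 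For the upper bound, I must show $\psi(t,p)\le f(x,y,p)$ for all $t\in[x,y)$, i.e. $H_z(t,p)\le f(x,y,p)$ whenever $(t,p)\in A_z$; here I would use Proposition \ref{prop:H}(iii) (constancy of $H_z(t,p)$ in $z$ on $\{z:t>h(z,p)\}$) together with part (iv) to compare $H_z(t,p)=f(z,t,p)$ against $f(x,y,p)$ — roughly, monotonicity of $f$ in its first and second arguments gives $f(z,t,p)\le f(x,y,p)$ when $z\le x$ and $t\le y$, and the case $z>x$ is handled by Proposition \ref{prop:H}(iv) which bounds $H_z(t,p)$ by $\rho$ of an intermediate point, itself $\le f(x,y,p)$. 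Assembling these bounds cleanly, and in particular getting the left-continuity limit argument and the case distinctions on the position of $h(x,p)$ relative to $[x,y)$ exactly right, is the delicate part of the argument.
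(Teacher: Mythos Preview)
Your approach is essentially the same as the paper's: split the supremum over the three regions $t<x$, $x\le t<y$, $t\ge y$, identify the first with $\rho(x)$ and the third with $-\infty$, and then match the middle region to $f(x,y,p)$ via Propositions~\ref{prop:h} and~\ref{prop:H}. The paper streamlines the middle-region analysis slightly differently: it first observes that each term $V_x(t,p):=H_x(t,p)\id_{\{(t,p)\in A_x\}}-\infty\id_{\{(t,p)\notin A_x\}}$ is increasing and left continuous in $t$, so $\psi(\cdot,p)$ inherits left continuity and $\sup_{x\le t<y}\psi(t,p)=\psi(y,p)$; then it computes $\rho(x)\vee\psi(y,p)$ by a single case split on whether $x\in B_{y,p}:=\{t:(y,p)\in A_t\}$, using Proposition~\ref{prop:H}(iii) when $x\in B_{y,p}$ and (iv) when $x\notin B_{y,p}$. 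Your direct bounding of each $\psi(t,p)$ for $t\in[x,y)$ achieves the same result with a bit more work.

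Two small points in your write-up need correction. First, the intermediate claim $\sup_{x\le t<y}\psi(t,p)=f(x,y,p)$ is stated too strongly: when $(y,p)\notin A_x$ one only has $\sup_{x\le t<y}\psi(t,p)\le \rho(x)=f(x,y,p)$, and strict inequality is possible (e.g.\ when $B_{y,p}=\emptyset$, where $\psi(y,p)=-\infty$); you do handle this correctly via the $t<x$ region, so the overall argument is fine, but the claim as stated is false. Second, in your upper bound the case $z>x$ is not handled by Proposition~\ref{prop:H}(iv) in the way you describe: what you actually need is that $(t,p)\in A_z\subseteq A_x$ (this inclusion is part of (iv)), and then Proposition~\ref{prop:H}(iii) gives $H_z(t,p)=H_x(t,p)=f(x,t,p)\le f(x,y,p)$ by monotonicity of $f$ in its second argument. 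The bound ``$H_z(t,p)\le\rho(\text{intermediate point})$'' from (iv) applies when $(t,p)\in A_{x_1}\setminus A_{x_2}$, which is not the configuration here.
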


Proposition \ref{prop:D2} justifies the necessity statement in Theorem \ref{th-main} on $\mathcal M_{D,2}$.

\subsubsection{Proofs of Propositions \ref{prop:h}, \ref{prop:H} and \ref{prop:D2}}\label{sec:h}

\begin{proof}[Proof of Proposition \ref{prop:h}]
(i) We prove this statement by contradiction. For $p\in[0,1]$, assume that there exists $x_1,x_2\in\R$ such that $x_1<x_2$ and $a_1>a_2$, where $a_i=h(x_i,p)$ for $i=1,2$. Let $F=\delta_{x_2}$ and $G=p\delta_{x_1}+(1-p)\delta_{a_1}$. It is clear that $a_1>a_2>x_2>x_1$. Hence, we have $F\vee G=p\delta_{x_2}+(1-p)\delta_{a_1}$. Axiom \ref{ax:E} implies
\begin{align}
\label{eq:proof1}
	f(x_2,a_1,p)=\rho\left(F\vee G\right)=\rho(F)\vee \rho(G)=\rho(x_2) \vee f(x_1,a_1,p).
\end{align} 
By the definition of $a_1$ and \eqref{eq:property1}, it holds that $f(x_1,a_1,p)=\rho(x_1)$ and $f(x_2,a_1,p)>\rho(x_2)$. Thus, 
\begin{align*}
\rho(x_2)<f(x_2,a_1,p)=\rho(x_2) \vee f(x_1,a_1,p)
=\rho(x_2)\vee\rho(x_1)=\rho(x_2),
\end{align*}
where the first equality follows from \eqref{eq:proof1} and the last inequality is due to Property \ref{ax:MC}. 
This yields a contradiction.

(ii) Let $p_1,p_2\in[0,1]$ satisfying $p_1\le p_2$ and $x\in\R$. We have $f(x,y,p_1)\ge f(x,y,p_2)$ for any $y\ge x$. This implies $\{y\in[x,\infty): f(x,y,p_1)>\rho(x)\}\supseteq \{y\in[x,\infty): f(x,y,p_2)>\rho(x)\}$. It then follows that $h(x,p_1)\le h(x,p_2)$, and this gives that $h(x,p)$ is increasing in $p$.

To see the upper semicontinuity, we need to verify that $\lim\sup_{p\to p_0} h(x,p)\le h(x,p_0)$ holds for any fixed $x\in\R$ and $p_0\in[0,1]$. The case that $h(x,p_0)=\infty$ is trivial, and we assume that $h(x,p_0)<\infty$. Note that $h(x,p)$ is increasing in $p$. It suffices to prove that $p\mapsto h(x,p)$ is right continuous at $p_0$.
Let $a_0=h(x,p_0)$.
We assume by contradiction that there exists $\epsilon_0>0$ and $\{p_n\}_{n\in\N}\subseteq[0,1]$ with $p_n\downarrow p_0$ such that $a_n>a_0+\epsilon_0$, where $a_n=h(x,p_n)$. It follows from \eqref{eq:property1} that
\begin{align}\label{eq-prop2eq}
	\rho(x)=f(x,a_n,p_n)\ge f(x,a_0+\epsilon_0,p_n).
\end{align}
Since $f(x,y,p)$ is right continuous in $p$, we have
\begin{align*}
\rho(x)<f(x,a_0+\epsilon_0,p_0)=\lim_{n\to\infty}f(x,a_0+\epsilon_0,p_n)\le\rho(x),
\end{align*}
where the inequality follows from the relation $a_0=h(x,p_0)$ and the definition of $h$ in \eqref{eq-h},
and the last equality is due to \eqref{eq-prop2eq}. This yields a contradiction, and we complete the proof.
\end{proof}

\begin{proof}[Proof of Proposition \ref{prop:H}]

(i) and (ii) are easy to see because $H_x$ has the same value of the mapping $(y,p)\mapsto f(x,y,p)$ and $f$ satisfies the corresponding properties.

(iii) For $y\in\R$ and $p\in[0,1]$, let  
 $x_1,x_2\in\R$ satisfying $x_1\le x_2\le y$ and $y>h(x_i,p)$ for $i=1,2$. Denote by $F=\delta_{x_2}$ and $G=p\delta_{x_1}+(1-p)\delta_y$. It holds that $F\vee G=p\delta_{x_2}+(1-p)\delta_y$. Axiom \ref{ax:E} implies
\begin{align}\label{eq-eqD2}
	f(x_2,y,p)=\rho\left(F\vee G\right)=\rho(F)\vee \rho(G)={\color{black}\rho(x_2)} \vee f(x_1,y,p).
\end{align} 
Since $y>h(x_i,p)$ for $i=1,2$, we have $f(x_1,y,p)>{\color{black}\rho(x_2)}$. Combining with \eqref{eq-eqD2}, we obtain $f(x_2,y,p)=f(x_1,y,p)$. This completes the proof of (iii).

(iv) Proposition \ref{prop:h} gives the decreasing monotonicity of $A_x$ in $x$. By \eqref{eq-D2-1}, we have $f(x_1,y,p)=H_{x_1}(y,p)$ and $f(x_2,y,p)=\rho(x_2)$ as $(y,p)\in A_{x_1}\setminus A_{x_2}$. Since $x_1\le x_2$, consistency with $\preceq_1$ implies $f(x_1,y,p)\le f(x_2,y,p)$. This completes the proof. \end{proof}

\begin{proof}[{Proof of Proposition \ref{prop:D2}}]
Denote by $V_x(y,p)=H_x(y,p)\id_{\{(y,p)\in A_x\}}-\infty\id_{\{(y,p)\notin A_x\}}$ for $x,y\in[0,1]$ and $p\in[0,1]$, and we have $\psi(y,p)=\sup_{x\in\R}V_x(y,p)$.

(i) For any fixed $x,y\in\R$,
it suffices to show that $p\mapsto V_x(y,p)$ is decreasing and lower semicontinuous.
If $x\ge y$, then $(y,p)\notin A_x$ for any $p\in[0,1]$. In this case, we have $V_x(y,p)=-\infty$ for all $p\in[0,1]$. Assume now $x<y$,  
there exists $p_0\in(0,1]$ such that 
\begin{align*}
\{p\in[0,1]:(y,p)\in A_x\}=\{p\in[0,1]:y>h(x,p)\}=[0,p_0),
\end{align*}
where the last step holds because  $h(x,p)$ is increasing and upper continuous in $p$ (see Proposition \ref{prop:h} (ii)). Moreover, 
it follows from Proposition \ref{prop:H} (ii) that $p\mapsto H_x(y,p)$ is decreasing and right continuous on 
$[0,p_0)$. Therefore, one can check that $p\mapsto V_x(y,p)$ is decreasing and lower semicontinuous. This completes the proof of {\color{black}(i)}.

(ii) When $p=0$, we have
\begin{align}
\label{eq:proof2}
\psi(y,0)=\sup_{x\in\R} \left\{H_x(y,0)\id_{\{(y,0)\in A_x\}}-\infty\id_{\{(y,0)\notin A_x\}}\right\}
=\sup_{x\in\R}\left\{\rho(y)\id_{\{y>x\}}-\infty\id_{\{y\le x\}}\right\}=\rho(y),
\end{align}
where we have used the left continuity of $c\mapsto \rho(c)$ in the last step.
Property \ref{ax:MC} implies $\rho(x)=\psi(x,0)<\psi(y,0)=\rho(y)$ if $x<y$.

When $p=1$, noting that $h(x,1)=\infty$ for any $x\in\R$, we have
\begin{align}
\label{eq:proof3}
	\psi(y,1)=\sup_{x\in\R} \left\{H_x(y,1)\id_{\{(y,1)\in A_x\}}-\infty\id_{\{(y,1)\notin A_x\}}\right\}
	=-\infty.
\end{align}
This completes the proof of (ii).

(iii) Denote by $\rho'(F)=\sup_{z\in\R}\psi(z,F(z))$. 
For $x\le y$ and $p\in[0,1]$, we define  $g(x,y,p)=\rho'(p\delta_x+(1-p)\delta_y)$.
We aim to show that $g(x,y,p)=f(x,y,p)$ for any $x\le y$ and $p\in[0,1]$. Recall the representation of $f$ in \eqref{eq-D2-1}. It suffices to show that $g(x,y,p)=H_x(y,p)$ if $(y,p)\in A_x$ and $g(x,y,p)=\rho(x)$ if $(y,p)\notin A_x$.
Below we prove this. 
Fix   $x\le y$ and $p\in (0,1)$.
Let $B_{y,p}=\{t\in \R: (y,p)\in A_t\}$.
If $x,t\in B_{y,p}$,
then $H_t(y,p)=H_{x}(y,p)\ge \rho(x)$ by Proposition \ref{prop:H} part (iii). 
Hence, if $x\in B_{y,p}$, then
\begin{align}
\label{eq:proof4}
\sup_{t\in B_{y,p}} H_t(y,p)= H_{x}(y,p)\ge \rho(x).
\end{align}
If $x \not \in B_{y,p}$, then for any $t\in B_{y,p}$,  Proposition \ref{prop:H} part (iv) yields  $t<x$ and 
\begin{align}
\label{eq:proof5}
H_t(y,p)\le \rho(x).
\end{align}
Putting the above observations together, we obtain 
\begin{align*}
g(x,y,p)&=\max\left\{\sup_{t<x}\psi(t,0),\sup_{x\le t<y}\psi(t,p),\sup_{t>y}\psi(t,1)\right\}
\tag*{\footnotesize [definition of $g$]}
\\
&=\max\{\psi(x,0), \psi(y,p)\}
\tag*{\footnotesize [left continuity of $t\mapsto \psi(t,p)$ and \eqref{eq:proof3}]}
\\&=\max\{\rho(x),\psi(y,p)\}
\tag*{\footnotesize [using \eqref{eq:proof2}]}
\\
&=\rho(x)\vee \sup_{t\in\R}\left\{H_t(y,p)\id_{\{(y,p)\in A_t\}}-\infty\id_{\{(y,p)\notin A_t\}}\right\} \tag*{\footnotesize [definition of $\psi$]} \\
&=\rho(x)\vee \sup\{H_t(y,p): t\in B_{y,p}\} \tag*{\footnotesize [removing $-\infty$ from the supremum]} 
\\
&= H_x(y,p) \id_{\{x\in B_{y,p}\}} + \rho(x) \id_{\{x\not \in B_{y,p}\}}  . \tag*{\footnotesize [using \eqref{eq:proof4} and \eqref{eq:proof5}]} 
\\&= f(x,y,p). \tag*{\footnotesize [definition of $f$]}  
\end{align*} 
 This completes the proof.
  \end{proof}

\subsubsection{Proof on  $\mathcal M_D$}
In this section, we aim to show that the representation on $\mathcal M_{D,2}$ can be extended to $\mathcal M_D$.

For $F\in\mathcal M_D$, we assume that $F$ has the form $F=\sum_{i=1}^n p_i\delta_{x_i}$, where $p_i\in [0,1]$ for $i\in[n]$, $\sum_{i=1}^n p_i=1$ and $x_1\le \dots\le x_n$. Denote by $F_k=\sum_{i=1}^k p_i \delta_{x_1}+(1-\sum_{i=1}^k p_i)\delta_{k+1}$ for $k\in[n-1]$. One can easily check that $F=\bigvee_{k=1}^{n-2}F_k$. Applying Axiom \ref{ax:E} and the representation on $\mathcal M_{D,2}$ in Proposition \ref{prop:D2}, we have 
\begin{align*}
\rho(F)&=\max_{k\in[n-1]}\rho(F_k)
\\&=\max_{k\in[n-1]} \sup_{x\in\R} \psi(x,F_k(x))
\\&=\sup_{x\in\R}\max_{k\in[n-1]}  \psi(x,F_k(x))\\
&=\sup_{x\in\R} \psi\left (x,\min_{k\in[n-1]}F_k(x)\right)
=\sup_{x\in\R} \psi(x,F(x)),
\end{align*}
where the fourth step holds because $\psi$ is decreasing in its second argument (see Proposition \ref{prop:D2} (ii)). Therefore, the representation on $\mathcal M_{D}$ holds.

\subsubsection{Proof on $\mathcal M_C$}
Let $F\in \mathcal M_C$. Denote by $m=\inf\{x: F(x)>0\}$ and $M=\inf\{x: F(x)\ge 1\}$ the left and right support endpoint of $F$. Let $T=M-m$. Define a sequence of discrete distributions $\{F_n\}_{n\in\N}$ as 
\begin{align*}
	F_n(x)=\begin{cases}
		0,~~&x<m,\\
		F\left(m+\frac{k}{n}T\right),~~&m+\frac{k-1}{n}T\le x< m+\frac{k}{n}T,~k\in[n],\\
		1,~~&x\ge M.
	\end{cases}
\end{align*}
Using the representation on $\mathcal M_D$, we have,  for  all $n\in\N$,
\begin{align*}
\rho(F_n)=\sup_{x\in\R}\psi(x,F_n(x))
=\rho(m)\vee\max_{k\in[n]}  \psi\left(m+\frac{k}{n}T,F\left(m+\frac{k}{n}T\right)\right)\le \sup_{x\in\R}\psi(x,F(x)).
\end{align*}
Note that $F_n\lawto F$, and using Axiom \ref{ax:LS} yields 
\begin{align}\label{eq-con}
\rho(F)\le	\liminf_{n\to\infty}\rho(F_n)\le \sup_{x\in\R}\psi(x,F(x)).
\end{align}
On the other hand, we will verify that $\rho(F)\ge \psi(x,F(x))$ holds for any $x\in\R$. If $x< m$, then   $\psi(x,F(x))=\psi(x,0)=\rho(x)<\rho(m)\le \rho(F)$, where the last step follows from consistency with $\preceq_1$. If $x\ge M$, then  $\psi(x,F(x))=\psi(x,1)=-\infty$, which implies $\rho(F)\ge \psi(x,F(x))$.  If $x\in[m,M)$, define $G=F(x)\delta_m+(1-F(x))\delta_x$. 
It holds that $\rho(G)=\sup_{t\in\R}\psi(t,G(t))=\rho(m)\vee \psi(x,F(x))\ge \psi(x,F(x))$. One can easily check that $G\preceq_1 F$, and consistency with $\preceq_1$ implies $\rho(F)\ge \rho(G)\ge  \psi(x,F(x))$. Therefore, we conclude that $\rho(F)\ge \sup_{x\in\R}\psi(x,F(x))$. Combining with \eqref{eq-con}, we obtain $\rho(F)=\sup_{x\in\R}\psi(x,F(x))$. This completes the proof of (i) $\Rightarrow$ (ii) Theorem \ref{th-main}.

\subsection{Proof of Theorem \ref{th:main-MM}}
\label{sec:pf2}

\emph{\bf Sufficiency.} Note that $\rho(F)=\sup_{x\in\R}\psi(x,F(x))$ with $\psi(x,p)=f(x)-\infty\id_{\{p\ge \Lambda(x)\}}$. One can check that $\psi$ satisfies all conditions in Theorem \ref{th-main} (ii), which implies that $\rho$ satisfies Axioms \ref{ax:ND}, \ref{ax:LS} and \ref{ax:E}. To see Axiom \ref{ax:MS}, denote by $S_F=\{x: F(x)<\Lambda(x)\}$ for $F\in\mathcal M_C$. Let $F,G\in\mathcal M_C$, and denote by $H(x)=F\wedge G (x)=\max\{F(x),G(x)\}$. It holds that 
$S_H=S_F\cap S_G$. Since $\Lambda$ is decreasing, $S_F$ and $S_G$ are both intervals with zero as their left endpoint.
Hence, we have $S_F\subseteq S_G$ or $S_G\subseteq S_F$. If $S_F\subseteq S_G$, then $\rho(F)\le \rho(G)$ and
\begin{align*}
\rho(H)=\sup_{x\in S_H} f(x)=\sup_{x\in S_F\cap S_G} f(x)=\sup_{x\in S_F} f(x)=\rho(F)
=\rho(F)\wedge \rho(G).
\end{align*}
Similarly, we have $\rho(H)=\rho(G)
=\rho(F)\wedge \rho(G)$ when $S_G\subseteq S_F$. Hence, Axiom \ref{ax:MS} holds.

\emph{\bf Necessity.}
Suppose that $\rho$ satisfies Axioms \ref{ax:ND}, \ref{ax:LS}, \ref{ax:E} and \ref{ax:MS}.
Recall that Axiom \ref{ax:E} is stronger than consistency with $\preceq_1$, and 
together with  Axiom \ref{ax:ND}, $\rho$ satisfies Property \ref{ax:MC}.
We will frequently use these two properties in the proof.
By Theorem \ref{th-main}, we know that $\rho$ has the form:
\begin{align}\label{eq-mmeq}
	\rho(F)=\sup_{x\in\R}\psi(x,F(x)) \mbox{ for all $F\in\mathcal M_C$},
\end{align}
for some
function $\psi:\R\times[0,1]\to \R\cup\{-\infty\}$ that is increasing and lower semicontinuous in the first argument and
decreasing and lower semicontinuous in the second argument and satisfies  $\psi(x,0)<\psi(y,0)$ for $x<y$ and $\psi(x,1)=-\infty$ for all $x\in\R$. 
Define $\Lambda:\R\to[0,1]$ as follows: 
\begin{align*}
	\Lambda(x)=\sup\{p\in[0,1]: \psi(x,p)=\psi(x,0)=\rho(x)\}.
\end{align*}
Further, define the right-continuous inverse of $\Lambda$ as
\begin{align}\label{eq-Lambda-1}
	\Lambda^{-1+}(p)=\sup\{x: p<\Lambda(x)\}.
\end{align}
We use the convention that $\sup\emptyset:=-\infty$ and $\rho(-\infty):=\lim_{x\to-\infty}\rho(x)$. Since $x\mapsto\rho(x)$ is increasing, we also have $\rho(-\infty)=\inf_{x\in\R}\rho(x)$.
It suffices to verify the following results in order:
\begin{align*}
\begin{array}{ll}
\mbox{(i) $\psi(x,p)=\rho(x)$ if $p<\Lambda(x)$};~~&
\mbox{(ii) $\Lambda$ is decreasing;}\\
\mbox{(iii) $\psi(x,p)\le \rho(\Lambda^{-1+}(p))$ if  $p\ge \Lambda(x)$};~~& \mbox{(iv) $\Lambda>0$ on $\R$.}
\end{array}
\end{align*}
First, we show that \eqref{eq-mmeq} reduces to the wanted representation if all above statements hold. Let  $F\in\mathcal M_C$.
Note that Property \ref{ax:MC} and Axiom \ref{ax:LS} imply that $x\mapsto \rho(x)$ is strictly increasing and left-continuous. 
For any $t\in\{x:F(x)\ge \Lambda(x)\}$, it follows from the decreasing monotonicity of $\Lambda$ and the increasing monotonicity of $F$ that
\begin{align*}
\sup\{x:F(t)<\Lambda(x)\}\le \sup\{x:F(x)<\Lambda(x)\}\le t.
\end{align*}
It holds that
$$
\psi(t,F(t))\le \rho(\Lambda^{-1+}(F(t)))=\sup_{x:F(t)<\Lambda(x)}\rho(x) \le \sup_{x:F(x)<\Lambda(x)}\rho(x),
$$
where the first inequality follows from (iii) as $F(t)\ge \Lambda(t)$.
Therefore, 
\begin{align*}
\rho(F)&=\sup_{x\in\R}\psi(x,F(x))=\sup_{x:F(x)<\Lambda(x)}\rho(x)\vee\sup_{x:F(x)\ge\Lambda(x)}\psi(x,F(x))\\
&=\sup_{x:F(x)<\Lambda(x)}\rho(x)
=\sup\{\rho(x):F(x)<\Lambda(x)\},
\end{align*}
as wanted. Next, we verify the four statements  (i)--(iv) mentioned above.


(i) Noting that $\psi$ is decreasing in the second argument, 
Statement (i) is obvious by the definition of $\Lambda$.

(ii) For any $x,y\in\R$ with $x<y$ and $p\in[0,1]$,
it follows from the representation of $\rho$ in \eqref{eq-mmeq} that
\begin{align*}
\rho(x)=\psi(x,0)~~{\rm and}~~
\rho(p\delta_x+(1-p)\delta_y)=\rho(x)\vee \psi(y,p).
\end{align*}
Let $x,y,z\in\R$ with $x< y< z$ and define $F=p \delta_x+(1-p)\delta_{z}$. Using Axiom \ref{ax:MS}, we have
\begin{align}\label{eq-minSchain}
\rho(x)\vee \psi(y,p)
=\rho(p\delta_{x}+(1-p)\delta_{y})=\rho(F\wedge \delta_{y})=\rho(F)\wedge \rho(y)=\{\rho(x)\vee \psi(z,p)\}\wedge \rho(y).
\end{align}
Suppose that $p> \Lambda(y)$, which yields $\psi(y,p)<\rho(y)$. Also note that Property \ref{ax:MC} implies $\rho(x)<\rho(y)$. Hence, 
$\rho(x)\vee \psi(y,p)<\rho(y)$. Combining with \eqref{eq-minSchain} yields 
\begin{align}\label{eq-eqminS}
\rho(x)\vee \psi(y,p)=\rho(x)\vee \psi(z,p)~~{\rm if}~x<y<z~{\rm and}~p> \Lambda(y).
\end{align}
We now prove that $\Lambda$ is decreasing by contradiction. Suppose that there exist $y,z\in\R$ with $y<z$ such that $\Lambda(y)<\Lambda(z)$. Choosing $p\in(\Lambda(y), \Lambda(z))$, the definition of $\Lambda$ implies $\psi(y,p)<\rho(y)$ and $\psi(z,p)=\rho(z)$. Hence, for $x<y$ we have
\begin{align*}
	\rho(x)\vee\psi(z,p)=\rho(z)>\rho(x)\vee \rho(y)\ge \rho(x)\vee \psi(y,p).
\end{align*}
This contracts \eqref{eq-eqminS}. Thus, we conclude that $\Lambda$ is decreasing.

(iii) Note that $\psi$ is right-continuous in the second argument. It suffices to verify that $\psi(x,p)\le \rho(\Lambda^{-1+}(p))$ whenever $p>\Lambda(x)$. 
Define $T_p=\{t: p>\Lambda(t)\}$ for $p\in[0,1]$. For a fixed $p\in[0,1]$, we assert that there are only two cases to happen:
\begin{itemize}
\item Case 1: $\psi(y,p)=\psi(z,p)>\inf_{t\in\R}\rho(t)$ for all $y,z\in T_p$. In this case, $\psi(\cdot,p)$ is a constant on $T_p$, which is denoted by $h(p)$.

\item Case 2: $\psi(z,p)\le \inf_{t\in\R}\rho(t)$ for all $z\in T_p$.
\end{itemize}
Otherwise (note that $\psi$ is increasing in the first argument), there exist $y_0,z_0\in T_p$ with $y_0<z_0$ such that $\psi(y_0,p)<\psi(z_0,p)$ and $\psi(z_0,p)>\inf_{t\in\R}\rho(t)$. Let $x_0\in\R$ with $x_0<y_0$ and $\psi(z_0,p)>\rho(x_0)$. Hence, we have
\begin{align*}
\psi(z_0,p)>\rho(x_0)\vee\psi(y_0,p)=\rho(x_0)\vee\psi(z_0,p)=\psi(z_0,p),
\end{align*} 
where the first equality follows from  \eqref{eq-eqminS}. This yields a contradiction. It is not difficult to check that $T_p$ is nonempty if and only if $p>\inf_{t\in\R}\Lambda(t)$. Define  $B_1,B_2\subseteq(\inf_{t\in\R}\Lambda(t),1]$ as the sets such that Case 1 holds if $p\in B_1$  and Case 2 holds if $p\in B_2$. 
Note that $\psi$ is decreasing in the second argument. One can check that $B_1,B_2$ are two intervals that satisfies $\sup_{t\in B_1}t\le \inf_{t\in B_2}t$. Clearly, $\psi(x,p)\le \rho(\Lambda^{-1+}(p))$ for $p\in B_2$ and $x\in T_p$. If $B_1=\emptyset$, then (iii) holds directly. Suppose now $B_1\neq \emptyset$, and it remains to verify that $h(p)=\psi(x,p)\le \rho(\Lambda^{-1+}(p))$ for $p\in B_1$ and $x\in T_p$.
By the definitions of $T_p$ and $\Lambda$, we have $h(p)=\psi(x,p)<\rho(x)$ for all $p\in B_1$ and $x\in T_p$.
Define $\Lambda^{-1}(p)=\inf\{x: \Lambda(x)<p\}$ as the left-continuous inverse of $\Lambda$. For any $p\in B_1$, it holds that $p>\Lambda(\Lambda^{-1+}(p)+\epsilon)$ for all $\epsilon>0$, and hence,
\begin{align*}
h(p)=\psi(\Lambda^{-1+}(p)+\epsilon,p)<\rho(\Lambda^{-1+}(p)+\epsilon) \mbox{ for all $\epsilon>0$}.
\end{align*}
This further implies
\begin{align}\label{eq-hle}
	h(p)\le \rho_+(\Lambda^{-1}(p)),~~{\rm where}~ \rho_+(x)=\lim_{t\downarrow x}\rho(t).
\end{align}
On the other hand, since $\psi$ is increasing in the first argument, we have for some $x_0\in T_p$
\begin{align*}
h(p)=\psi(x_0,p)\ge \psi(t,p)=\rho(t) \mbox{ for all $t\in\R$ with $p<\Lambda(t)$},
\end{align*}
where the last step follows from Statement (i). Hence, we have
$
h(p)\ge \rho(\Lambda^{-1+}(p)-\epsilon)
$
for all $\epsilon>0$,
where $\Lambda^{-1+}$ defined in \eqref{eq-Lambda-1} is the right-continuous inverse of $\Lambda$.
Note that $x\mapsto\rho(x)$ is increasing and left-continuous, and we have
$	h(p)\ge \rho(\Lambda^{-1+}(p))$ for all $p\in B_1$.
Combining with \eqref{eq-hle} yields
\begin{align*}
\rho(\Lambda^{-1+}(p))	\le	h(p)\le \rho_+(\Lambda^{-1}(p)) \mbox{ for all $p\in B_1$}.
\end{align*}
Since $x\mapsto\rho(x)$ is increasing and left-continuous, $\rho_+(x)$ is increasing and right-continuous, $\Lambda^{-1}$ is decreasing and left-continuous and $\Lambda^{-1+}$ is decreasing and right-continuous, one can check that $\rho(\Lambda^{-1+}(p))$ is the right-continuous version of $\rho_+(\Lambda^{-1}(p))$ on $B_1$. Also note that the lower semicontinuity of $\psi$ in its second argument implies that $h$ is right-continuous on $B_1$. Hence, we have
\begin{align*}
	h(p)=\rho(\Lambda^{-1+}(p)) \mbox{ for all $p\in B_1$}.
\end{align*}
This competes the proof of Statement (iii).

(iv) We assume by contradiction that $\{x: \Lambda(x)=0\}\neq\emptyset$ and define
$c:=\inf\{x: \Lambda(x)=0\}$. Since $\Lambda$ is decreasing, we have that $\Lambda(x)>0$ for $x<c$ and $\Lambda(x)=0$ for $x>c$.
By the previous results, we know that 
\begin{align*}
\rho(F)=\sup\{\rho(x): F(x)<\Lambda(x)\} \mbox{ for all $F\in\mathcal M_C$}.
\end{align*}
Following standard calculation, one can get $\rho(c+1)=\rho(c)$, which contradicts Property \ref{ax:MC}. Hence, we complete the proof.



\small

\end{document}